\documentclass[reqno,centertags,12pt]{amsart}
\usepackage{amsmath,amsthm,amscd,amssymb}
\usepackage{latexsym}
\usepackage{graphicx}
\usepackage{enumerate}
\usepackage{ulem} 
\usepackage[mathscr]{eucal}

\usepackage{hyperref}


\newcommand{\bbN}{{\mathbb{N}}}

\newcommand{\bbR}{{\mathbb{R}}}
\newcommand{\bbS}{{\mathbb{S}}}

\newcommand{\bbZ}{{\mathbb{Z}}}

\newcommand{\calC}{{\mathcal{C}}}

\newcommand{\calF}{{\mathcal F}}

\newcommand{\bdone}{{\boldsymbol{1}}}


\newcommand{\lb}{\label}

\newcommand{\tr}{\text{\rm{Tr}}}
\newcommand{\ntr}{\text{\sout{\rm{Tr}}}}

\newcommand{\supp}{\text{\rm{supp}}}

\newcommand{\bi}{\bibitem}

\newcommand{\beq}{\begin{equation}}
\newcommand{\eeq}{\end{equation}}
\newcommand{\ba}{\begin{align}}
\newcommand{\ea}{\end{align}}


\newcommand{\jap}[1]{\langle #1 \rangle}

\newcommand{\Norm}[1]{\lVert#1\rVert}


\newcommand{\sgn}{\mathrm{sgn}}





\newcounter{smalllist}




\newtheorem{theorem}{Theorem}[section]
\newtheorem{proposition}[theorem]{Proposition}
\newtheorem{lemma}[theorem]{Lemma}
\newtheorem{corollary}[theorem]{Corollary}

\theoremstyle{definition}
\newtheorem{example}[theorem]{Example}

\newtheorem*{remark}{Remark}
\newtheorem*{remarks}{Remarks}

\sloppy
\allowdisplaybreaks
\numberwithin{equation}{section} 

\begin{document}

\title{Comparison of Ising Models Under Change of Apriori Measure}
\author[J.~Madrid, B.~Simon and D.~R.~Wells]{Jos\'{e} Madrid$^{1}$, Barry Simon$^{2,3}$, Daniel R. Wells}

\thanks{$^1$ Department of Mathematics, University of California Los Angeles, Portola Plaza 520, Los Angeles, CA 90095, USA. E-mail: jmadrid@math.ucla.edu}

\thanks{$^2$ Departments of Mathematics and Physics, Mathematics 253-37, California Institute of Technology, Pasadena, CA 91125. E-mail: bsimon@caltech.edu}

\thanks{$^3$ Research supported in part by Israeli BSF Grant No.~2020027.}

\dedicatory{Dedicated with pleasure to Elliott Lieb on his $90^{th}$ birthday.}

\date{\today}
\keywords{Ising Model, Correlation Functions, $D$-vector Model, Spin $S$ Models, Mean Field Theory, Majorization}
\subjclass[2020]{82B20, 26D20, 46N10}

\begin{abstract}
We study comparison of correlation functions for ferromagnetic generalized Ising models with two different apriori measures.  One purpose of this note is to publicize some unpublished 45 year old work of Daniel Wells on the issue.  We then prove results for the apriori measures associated to one component of $D$-vectors uniformly distributed on the unit sphere and also the case of spin $S$ ($2S+1$ equally spaced values symmetric about $0$ and with equal weights) that improves some 50 year old bounds of Griffiths on transition temperatures.
\end{abstract}

\maketitle

\section{Introduction} \lb{s1}

Besides Elliott Lieb's many major accomplishments, there are numerous gems that sparkle even though they aren't among his most important.  In this note, we want to discuss something related to his beautiful note \cite{LiebClassLim} on the infinite spin limit of the pressure of quantum Heisenberg models which converges to a classical Heisenberg model, a work which motivated the second author's extension \cite{SimonClassLim} to more general Lie groups than $SU(2)$.  Lieb proved comparison inequalities for partition functions that squeezed the spin $S$ quantum partition function between the corresponding classical partition functions with slightly different coupling constants so that the difference of the coupling constants goes to zero as $S\to\infty$.  In the totally anisotropic case (where only $z$ components are coupled), Dyson, Lieb and the second author (we never published this work done in 1976 but it was included it in the 1993 book of Simon \cite[Section II.9]{SMLG}) proved monotonicity (increasing) of the partition function in $S$ and decreasing monotonicity if the coupling is scaled properly.

When we were working on Thomas-Fermi, Elliott taught the second author about the magic of convergence of convex functions - that convergence of convex functions implies convergence of derivatives at points where the limit is differentiable.  This implies convergence of certain correlation functions in the context that Lieb studied in \cite{LiebClassLim}.  But one loses for correlation functions inequalities like those that Dyson, Lieb and Simon found for partition functions.  It is that question that we want to discuss here.  Our framework will be less general in that we will only consider one component spins and more general in that we will allow general (even) apriori measures.

The second author is writing a book for Cambridge Press entitled \textit{Phase Transitions in the Theory of Lattice Gases} \cite{PTLG}. It is in many ways the successor to the 1993 book \cite{SMLG} from Princeton University Press.  That earlier book was mainly framework and largely left out all the most fun and beautiful elements of the theory: Correlation Inequalities, Lee-Yang, Peierls' Argument, Berezinskii-Kosterlitz-Thouless transitions and Infrared Bounds which are the subjects of the new book.  But since a different publisher is used, this is certainly \textit{not} volume 2 of the earlier work.

The framework for much of the subject is to fix a finite set $\Lambda\subset\bbZ^\nu$, and an apriori \textit{even} probability measure, $d\mu$, on $\bbR$, certainly with all moments finite and typically of compact support.

One considers the configurations in $\Lambda$, i.e. points $\mathbf{\sigma}$ in $\bbR^\Lambda$, indicated by $\{\sigma_j\}_{j\in\Lambda}$ and uncoupled measure with expectation
\begin{equation}\label{1.1}
  \jap{f}_{\mu,0} = \int f(\mathbf{\sigma})\,\prod_{j\in\Lambda} d\mu(\sigma_j)
\end{equation}
and one fixes a ferromagnetic Hamiltonian (i.e. $J(A) \ge 0$)
\begin{equation}\label{1.2}
  -H = \sum_{A\subset\Lambda} J(A) \sigma^A \qquad \sigma^A = \prod_{j\in A} \sigma_j
\end{equation}
or more general over mutliindices, i.e. assignments of an integer, $n_j\ge 0$ with then $\sigma^A = \prod_{j\in A} \sigma_j^{n_j}$ (and a finite sum or else $\ell^1$ condition).  One then considers, the Gibbs state
\begin{equation}\label{1.3}
   \jap{f}_{\mu,\Lambda} =Z^{-1} \jap{f e^{-H}}_{\mu,0};\qquad Z=\jap{e^{-H}}_{\mu,0}
\end{equation}

One studies the infinite volume limit with translation invariant $J(A)$, typically by proving stuff about the finite volume expectations.  The traditional case is the Ising model (aka spin $1/2$ Ising model) where $d\mu$ is a measure supported on $\pm 1$ each point with weight $1/2$; more generally, we'll refer to $b_T$ with weights $1/2$ at $\pm T$ ($b$ is for Bernoulli).   While a lot of the literature is specific to the spin $1/2$ Ising model, there is considerable, mathematically interesting, literature on more general (even) apriori measures.  Traditionally, one mainly considered the spin $S$ measure (for $S=\tfrac{1}{2},1,\tfrac{3}{2},....$, the measure with $2S+1$ pure points equally spaced symmetrically about $0$ and with equal weights) but the work of Guerra, Rosen and Simon \cite{GRS} and Griffiths-Simon \cite{GriffSi} on discrete approximations to Euclidean Quantum Field Theory changed that.

As the second author began to write about correlation inequalities in his new book, he wondered about a natural question.  We say that an apriori measure, $\nu$, on $\bbR$ \textit{Ising dominates} another measure $\mu$ if and only if for all $J(A)\ge 0$ and all $B$, one has that
\begin{equation}\label{1.4}
  \jap{\sigma^B}_{\mu,\Lambda} \le \jap{\sigma^B}_{\nu,\Lambda}
\end{equation}
In particular, for general $\mu$ compact support, does one have that $\mu$ Ising dominates $b_{T_-}$ and is Ising dominated by $b_{T_+}$ for suitable $0< T_- < T_+ <\infty$.  That would imply phase transitions occur for one apriori measure if and only if they do for all and inequalities on transition temperatures.

For most, even minor, aspects of the subject of correlation inequalities there are several papers, sometimes even dozens.  So it is surprised that the second author was unable to find a single published paper on the subject of what we just called Ising domination! Of course, it was unclear how to search for the subject in Google.  Eventually, we did find one 1978 paper of van Beijeren and Sylvester \cite{vBS} that we'll mention later (see Remark 2 after Theorem \ref{T3.1}) although in one respect it is unsatisfactory.  And we did also find an appendix of a paper on another subject but that gets ahead of our story (see the Remark 1 after Theorem \ref{T3.1}).

One of the pleasant things about writing a book on a subject that one once knew more about is that one gets to rediscover things that they have forgotten.  With the question of Ising domination in the back of his mind, the second author found an interesting footnote in a 1980 paper of Aizenman and er, B. Simon \cite{AiSiRotor}.  The footnote said

\medskip

\begin{quote}
  then by results of Wells (D. Wells, \textit{Some moment inequalities for general spin Ising ferromagnets}, Indiana Univ. preprint) $\jap{s_js_k}_{\beta,1} \le 2\jap{\sigma_j^{(1)}\sigma_k^{(1)}}_{\beta,2}$.
\end{quote}

\medskip

The left hand side is an Ising expectation and the right with the apriori measure of the $2D$ rotor with only couplings of the $1$ components.  So this was part of what seems to be an Ising domination result (the subscript $2$ indicates the Ising measure should really be $b_{1/\sqrt{2}}$).

So the second author set about finding this preprint.  Google didn't help directly but did point him to a 1984 paper of Chuck Newman that mentioned Wells' Indiana University PhD. thesis.  He wrote to Michael asking if he knew anything about our footnote and cced Chuck (who had been a grad student with the second author at Princeton) because the second author conjectured Wells had been his student.  Chuck replied and said he remembered that Wells had been Slim Sherman's student.  Sherman, the S of GKS and GHS was a delightful character, long dead.

So the second author wrote to Kevin Pilgrim, the chair at Indiana, who located a copy of Wells thesis \cite{WellsTh} on Proquest.  So far though, no one has had any luck on the preprint nor on locating Wells through Indiana University alumni records (but see later)!  While the thesis did not have anything directly about the above inequality, it did have a general framework on what we called the Ising domination problem, lovely material that should have been published.  After an initial draft of this note was written, the first two authors got some help and located Dr. Wells who kindly agreed to be a coauthor which makes sense since much of this Note publishes for the first time results from his thesis.  Also, we convinced him to allow us to continue to use the term \textit{Wells domination}.

Our main goal in this Note is to describe Wells' framework in Section \ref{s2} and what we regard as his most significant theorem in Section \ref{s3}.  Since Wells extended a framework of Ginibre, we begin Section \ref{s2} by reminding (telling) you of that. Then the notion we call Wells' domination followed by his big theorem in Section \ref{s3} and the notion of canonical lower bound.  We'll note there that the approach of van Beijeren and Sylvester \cite{vBS} has one big flaw in that there is no analog of the Wells Comparison Theorem, Theorem \ref{T3.1}.  Section \ref{s3A} will then make explicit the bounds on transition temperatures implied by Wells Comparison Theorem, recall a result of Griffiths \cite{GriffTrick} on comparison of transition temperatures for different spins and note that the bounds when a measure is canonical are equalities in mean field theory and so optimal in the high dimension limit. Section \ref{s4} has one of our two new results here - that the distribution of one component of an $D$-vector spin has a canonical lower bound.  From the footnote in \cite{AiSiRotor}, it is clear that the missing Wells preprint has the case $D=2$.  We'll see that case is much easier to prove than $D\ge 3$. In Section \ref{s5}, we will prove our most significant new result that the spin $S$ measures for $S\ne 1$ have a canonical lower bound and its consequence for improving Griffiths bound on transition temperatures for arbitrary spins.  Finally, an appendix \ref{App} provides the proof of a technical inequality.

We would especially like to thank Kevin Pilgrim for his help in locating Wells' thesis \cite{WellsTh} and Leonard Schulman and Joshua David Paik for helping the first two authors make contact with the third. We should also like to thank Terry Tao. The inequality \eqref{5.6} was originally only conjectured by the second author because it would imply the result on spin $S$ ($S\ne 1$) being canonical.  The second author asked Terry if he could prove the conjecture and Terry passed it on to the first author (who proved it) thereby making a successful shidduch.

\section{Wells Framework} \lb{s2}

As mentioned, Wells' approach is a slight modification of Ginibre's approach to the proof of GKS inequalities for Ising type models.  In a remarkable 1970 paper \cite{GinibreGKS}, Jean Ginibre (who alas passed away in March of 2020 at age 82) not only found a really simple proof of GKS inequalities but showed somewhat surprisingly that they held for all (even) apriori measures. If you are new to Ising models and have time for only one result, this one might be what you should know.

A \textit{Ginibre system} is a triple $\jap{X, \mu, \calF}$  of a compact Hausdorff space, $X$, a probability measure, $\mu$, on $X$ (with expectations $\jap{\cdot}_\mu$) and a class of continuous real valued functions $\calF\subset C(X)$ that obeys:
\begin{equation}\label{2.1}
  (G1) \qquad \forall_{f_1,...f_n\in\calF} \int_X f_1(x)\dots f_n(x)\,d\mu(x) \ge 0
\end{equation}
\begin{equation}\label{2.2}
  (G2) \qquad \forall_{f_1,...f_n\in\calF} \int_{X\times X} \prod_{j=1}^{n} \left(f_j(x)\pm f_j(y)\right) \, d\mu(x) d\mu(y) \ge 0
\end{equation}
for all $2^n$ choices of the plus and minus sign.

When it is clear which measure is intended, we will drop the $\mu$ from $\jap{\cdot}_\mu$. We have restricted to compact Hausdorff spaces and so bounded functions for simplicity.  But since all the arguments are essentially algebraic, all results extend to the case where $X$ is only locally compact so long as all $f\in\calF$ obey $\int |f(x)|^m\,d\mu(x) < \infty$ for all $m$ since that condition assures that all integrals below are convergent.

Note that
\begin{equation*}
  (G2) \Rightarrow 2\jap{f}_\mu = \int_X (f(x)+ f(y))\,d\mu(x)d\mu(y) \ge 0
\end{equation*}
and
\begin{align*}
  \int_{X\times X} (f(x) -& f(y))(g(x)-g(y)) \, d\mu(x) d\mu(y) \\
                          &= 2\left[\jap{fg}_\mu-\jap{f}_\mu \jap{g}_\mu\right]\ge 0
\end{align*}
We will see shortly that $(G2)\Rightarrow(G1)$.  What makes the notion so powerful is that there are three theorems for getting new Ginibre systems from old ones.

Given a family of functions, $\calF\subset C(X)$, we define the \textit{Ginibre cone}, $\calC(\calF)$, as the set of linear combinations with non-negative coefficients of products of functions from $\calF$.
\begin{theorem} [Ginibre Theorem 1] \lb{T2.1} If a triple $\jap{X, \mu, \calF}$ obeys $(G2)$, so does $\jap{X, \mu, \calC(\calF)}$.
\end{theorem}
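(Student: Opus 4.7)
The plan is to reduce to products of elements of $\calF$ by multilinearity, and then to expand each such product using a simple change of variables that rewrites the integrand as a nonnegative combination of the (G2) integrals already available for $\calF$.

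First, for any fixed choice of signs $\epsilon_j \in \{+1,-1\}$, the integrand $\prod_{j=1}^n (f_j(x) + \epsilon_j f_j(y))$ is linear in each $f_j$ separately. So if $f_j = \sum_k c_{j,k} p_{j,k}$ with $c_{j,k}\ge 0$ and each $p_{j,k}$ a product of functions from $\calF$, the (G2) integral for $(f_1,\dots,f_n)$ expands as $\sum_{\mathbf{k}} \bigl(\prod_j c_{j,k_j}\bigr) I_{\epsilon}(p_{1,k_1},\dots,p_{n,k_n})$ with nonnegative coefficients. It therefore suffices to prove (G2) when each $f_j = g_{j,1}g_{j,2}\cdots g_{j,m_j}$ is a pure product of elements of $\calF$.

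Second, I would perform the change of variables $u_{j,i} = \tfrac12(g_{j,i}(x)+g_{j,i}(y))$, $v_{j,i} = \tfrac12(g_{j,i}(x)-g_{j,i}(y))$, so that $g_{j,i}(x) = u_{j,i}+v_{j,i}$ and $g_{j,i}(y) = u_{j,i}-v_{j,i}$. Expanding each product,
\[
f_j(x) + \epsilon_j f_j(y) = \sum_{S\subseteq [m_j]} \bigl(1 + \epsilon_j(-1)^{|S|}\bigr) \prod_{i\in S} v_{j,i} \prod_{i\notin S} u_{j,i},
\]
and each coefficient $1+\epsilon_j(-1)^{|S|}$ is either $0$ or $2$. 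Taking the product over $j$ therefore presents $\prod_j (f_j(x)+\epsilon_j f_j(y))$ as a \emph{nonnegative} linear combination of monomials $\prod_{(j,i)\in T} v_{j,i} \prod_{(j,i)\notin T} u_{j,i}$ as $T$ ranges over subsets of $\{(j,i):1\le j\le n,\,1\le i\le m_j\}$ with the correct parity in each block.

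Third, substituting back the definitions of $u_{j,i}$ and $v_{j,i}$ and integrating, each such monomial yields, up to a positive constant $2^{-N}$ with $N=\sum_j m_j$, an integral of the form
\[
\int\!\!\int \prod_{(j,i)\in T}\bigl(g_{j,i}(x)-g_{j,i}(y)\bigr) \prod_{(j,i)\notin T}\bigl(g_{j,i}(x)+g_{j,i}(y)\bigr)\, d\mu(x)\,d\mu(y),
\]
which is exactly a (G2) integral for the original family $\calF$ with a specific choice of $\pm$ signs; by hypothesis this is $\ge 0$. Summing the resulting nonnegative contributions gives (G2) for $\calC(\calF)$. The only delicate point is the parity bookkeeping in the second step, but factoring out the $2$'s makes all surviving coefficients manifestly nonnegative, so there is no real obstacle — the proof is essentially the clever substitution $g_{j,i}(x)\pm g_{j,i}(y) = 2u_{j,i}$ or $2v_{j,i}$ combined with multilinearity.
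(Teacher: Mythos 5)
Your proof is correct, and it is essentially the same argument as the paper's, just unrolled. The paper reduces by linearity to products, handles products by induction on length, and for two factors applies the identity $fg\pm f'g' = \tfrac{1}{2}(f+f')(g\pm g')+\tfrac{1}{2}(f-f')(g\mp g')$ --- which is precisely your $u,v$ substitution specialized to two factors --- whereas you carry out the full multinomial expansion over all factors at once, which works because the surviving coefficients $1+\epsilon_j(-1)^{|S|}$ are manifestly nonnegative.
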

It is trivial that $(G2)$ holds for sums and positive multiples of functions for which it holds, so it suffices to prove it holds for products.  By induction, we need only handle products of two functions.  We note that
\begin{equation}\label{2.3}
  fg\pm f'g' = \tfrac{1}{2}(f+f')(g\pm g')+\tfrac{1}{2}(f-f')(g\mp g')
\end{equation}
which allows us to prove $(G2)$ for a single product when we have it for individual functions (and shows (G2)$\Rightarrow$(G1)).

The following is trivial
\begin{theorem} [Ginibre Theorem 2] \lb{T2.2} Let $\left\{\jap{X_j, \mu_j, \calF_j}\right\}_{j=1}^n$ be a family of Ginibre systems.  Then $\jap{\times_{j=1}^n X_j,\otimes_{j=1}^n \mu_j, \cup_{j=1}^n \calF_j}$ is also a Ginibre system
\end{theorem}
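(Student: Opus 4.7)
The plan is to verify property $(G2)$ directly using the product structure, since $(G2) \Rightarrow (G1)$ has already been established via \eqref{2.3}. There is essentially no obstacle here -- this is why Ginibre and the authors call the result trivial -- but I would write it out carefully to make the bookkeeping with signs transparent.

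Write $X = \bigtimes_{j=1}^n X_j$, $\mu = \otimes_{j=1}^n \mu_j$, and view each $f \in \calF_j$ as a function on $X$ by $f(x_1,\dots,x_n) := f(x_j)$, so that $\bigcup_{j=1}^n \calF_j \subset C(X)$ in the natural way. Fix an arbitrary finite family $f_1,\dots,f_m \in \bigcup_{j=1}^n \calF_j$ and a choice of signs $\varepsilon_1,\dots,\varepsilon_m \in \{+,-\}$. Let $\sigma: \{1,\dots,m\} \to \{1,\dots,n\}$ be the map that sends $i$ to the unique index with $f_i \in \calF_{\sigma(i)}$, and set $S_j = \sigma^{-1}(\{j\})$.

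The key step is Fubini/Tonelli: since each $f_i$ depends only on the coordinate $x_{\sigma(i)}$, the integrand factorizes across coordinates, and
\begin{equation*}
\int_{X \times X} \prod_{i=1}^m \bigl(f_i(x) \,\varepsilon_i\, f_i(y)\bigr)\, d\mu(x)\,d\mu(y) = \prod_{j=1}^n \int_{X_j \times X_j} \prod_{i \in S_j} \bigl(f_i(x_j) \,\varepsilon_i\, f_i(y_j)\bigr)\, d\mu_j(x_j)\,d\mu_j(y_j).
\end{equation*}
Each factor on the right is an instance of the $(G2)$ integral for the Ginibre system $\langle X_j, \mu_j, \calF_j\rangle$ with the subfamily $\{f_i\}_{i \in S_j} \subset \calF_j$ and signs $\{\varepsilon_i\}_{i \in S_j}$, so each factor is $\ge 0$. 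Hence the product is $\ge 0$, establishing $(G2)$ for $\langle X, \mu, \bigcup_j \calF_j\rangle$.

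The only point that requires a moment's thought is that the sign assignments in the factored form range only over the signs inherited from the global choice $(\varepsilon_1,\dots,\varepsilon_m)$ rather than over all $2^{|S_j|}$ possibilities; but this is harmless since $(G2)$ for $\calF_j$ is assumed for every sign pattern, so in particular for the inherited one. Property $(G1)$ for the product system then follows from $(G2)$ via the identity \eqref{2.3} already exploited in the proof of Theorem \ref{T2.1}, so no separate argument is needed.
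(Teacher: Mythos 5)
Your proof is correct and is the argument the paper has in mind; the paper simply labels the result ``trivial'' without writing it out, and the Fubini factorization across coordinates that you give is exactly the intended observation.
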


And to add interactions, we use
\begin{theorem} [Ginibre Theorem 3] \lb{T2.3} Let $\jap{X, \mu, \calF}$ be Ginibre system.  Let $-H\in\calF$ and define a new measure, $\mu_H$ by
\begin{equation}\label{2.4}
  \jap{f}_{\mu_H} = \frac{\jap{fe^{-H}}_\mu}{\jap{e^{-H}}_\mu}
\end{equation}
Then $\jap{X, \mu_H, \calF}$ is a Ginibre system.
\end{theorem}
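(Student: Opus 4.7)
The plan is to expand $e^{-H}$ in its power series and reduce $(G2)$ for $\mu_H$ to the original $(G2)$ for $\jap{X,\mu,\calF}$, using only the hypothesis that $-H\in\calF$. That hypothesis is the crucial ingredient: it means $(-H)^k$ is literally a $k$-fold product of elements of $\calF$, so no appeal to Theorem \ref{T2.1} is needed. Throughout, $X$ compact Hausdorff together with continuity of $H$ and of the $f_j$ yields uniform bounds that make every series-integral interchange immediate from dominated convergence.

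The decisive algebraic observation is
$$e^{-H(x)-H(y)} \;=\; \sum_{k=0}^{\infty}\frac{1}{k!}\bigl((-H)(x)+(-H)(y)\bigr)^k \;=\; \sum_{k=0}^{\infty}\frac{1}{k!}\prod_{\ell=1}^{k}\bigl((-H)(x)+(-H)(y)\bigr),$$
so the exponential weight is a non-negative linear combination of $k$-fold plus-products of the single $\calF$-element $-H$. Substituting this into
$$Z^2\int\prod_{j=1}^{n}\bigl(f_j(x)\pm f_j(y)\bigr)\,d\mu_H(x)\,d\mu_H(y) \;=\; \int\prod_{j=1}^{n}\bigl(f_j(x)\pm f_j(y)\bigr)\,e^{-H(x)-H(y)}\,d\mu(x)\,d\mu(y),$$
and exchanging sum and integral, I would express the right side as $\sum_{k\ge 0}(1/k!)\,I_k$, where $I_k$ is a $(G2)$ expression for the $n+k$ elements $f_1,\dots,f_n,-H,\dots,-H\in\calF$: the original $\pm$-pattern on the $f_j$'s and a $+$-sign on each of the $k$ copies of $-H$. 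Hence each $I_k\ge 0$, the sum is $\ge 0$, and division by $Z^2>0$ (positivity of $Z$ is immediate from $e^{-H}>0$) gives $(G2)$ for $\mu_H$. A completely parallel power-series expansion handles $(G1)$ for $\mu_H$ directly, though this also drops out of $(G2)$ as recorded just after Theorem \ref{T2.1}.

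The only conceivable obstacle is the interchange of sum and integral, and uniform boundedness disposes of it. The proof therefore reduces to recognizing the trivial but key identity $-H(x)-H(y)=(-H)(x)+(-H)(y)$, which allows each power $(-H(x)-H(y))^k$ to be read as exactly $k$ plus-type $(G2)$-factors drawn from $\calF$; once that is spotted, Ginibre's original axioms do all the real work.
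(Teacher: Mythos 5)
Your proof is correct and follows precisely the approach the paper sketches in one line: normalize away $Z^2$, expand $e^{-H(x)-H(y)}=\sum_k \frac{1}{k!}\bigl((-H)(x)+(-H)(y)\bigr)^k$, and recognize each term as an application of $(G2)$ for $\mu$ to $f_1,\dots,f_n$ plus $k$ copies of $-H$ with plus signs. You have simply supplied the (routine) details that the paper compresses into ``we expand the exponential.''
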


The proof is easy.  The normalization is irrelevant and we expand the exponential $\exp(-H(x)-H(y))$.  Finally
\begin{theorem} [Ginibre Theorem 4] \lb{T2.4} Let $X$ be $\bbR$ or a compact subset of the form $[-A,A]$ and let $d\mu$ be a probability measure which is invariant under $x\mapsto -x$ and so that (only non-trivial in case $X$ is not compact) $\int x^{2n} \, d\mu(x) < \infty$ for all $n$.  Let $\calF$ contain the single function, $f(x)=x$.  Then $\jap{X, \mu, \calF}$ is a Ginibre system.
\end{theorem}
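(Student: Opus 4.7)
The plan is to verify (G2) directly; (G1) then comes for free from the implication $(G2)\Rightarrow(G1)$ noted after Theorem \ref{T2.1}. Since $\calF$ contains only $f(x)=x$, the product appearing in (G2) collapses: every factor $f_j(x)\pm f_j(y)$ is either $x+y$ or $x-y$, so each of the $2^n$ sign patterns produces an integrand of the form $(x+y)^a(x-y)^b$ with $a+b=n$. Thus the claim reduces to showing
\begin{equation*}
  I(a,b) := \iint_{X\times X} (x+y)^a(x-y)^b\,d\mu(x)\,d\mu(y) \ge 0
\end{equation*}
for all non-negative integers $a,b$.

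I would handle this by splitting into three parity cases, each exploiting a symmetry of the product measure $d\mu\otimes d\mu$. First, if $a+b$ is odd, apply the involution $(x,y)\mapsto(-x,-y)$; it preserves $d\mu\otimes d\mu$ because $\mu$ is even, and multiplies the integrand by $(-1)^{a+b}=-1$, forcing $I(a,b)=0$. Second, if $a$ and $b$ are both odd, apply the swap $(x,y)\mapsto(y,x)$; it also preserves the product measure and multiplies the integrand by $(-1)^b=-1$, again giving $I(a,b)=0$. Third, if $a$ and $b$ are both even, write $(x+y)^a=((x+y)^2)^{a/2}$ and $(x-y)^b=((x-y)^2)^{b/2}$, so the integrand is pointwise non-negative and $I(a,b)\ge 0$.

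These three cases exhaust all possibilities (since $a+b$ even with $a,b$ not both odd means both are even), which completes the verification. The moment hypothesis $\int x^{2n}\,d\mu(x)<\infty$ is used only to guarantee absolute integrability of each polynomial integrand so the symmetry arguments are legitimate. There is really no essential obstacle here: the entire proof rests on the two elementary symmetries of $d\mu\otimes d\mu$ (coordinate swap and joint sign-flip) together with the evenness of $\mu$. The only mild care needed is bookkeeping the parity cases; everything else is immediate.
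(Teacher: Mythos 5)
Your proof is correct and essentially identical to the paper's: the same reduction to integrals of $(x+y)^a(x-y)^b$, the same two symmetries (the swap $(x,y)\mapsto(y,x)$ and the joint sign flip $(x,y)\mapsto(-x,-y)$) to kill the odd-parity cases, and the same observation that the integrand is pointwise non-negative when both exponents are even. The only difference is a trivial reorganization of the parity bookkeeping.
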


The proof is easy! $(G2)$ says that for all non-negative integers, $k$ and $m$, one has that
\begin{equation}\label{2.5}
  \int_{X\times X} (x + y)^k(x - y)^m \, d\mu(x) d\mu(y) \ge 0
\end{equation}
Interchanging $x$ and $y$ implies the integral is zero if $m$ is odd and $(x,y)\mapsto (-x,-y)$ symmetry implies the integral is zero if $m+k$ is odd. Thus the only possible non-zero integrals are when $m$ and $k$ are even in which case the integrand is positive!

A little thought shows that for Hamiltonians of the form
\begin{equation}\label{2.6}
  -H = \sum_{A\subset\Lambda} J(A) \sigma^A \qquad \sigma^A = \prod_{j\in A} \sigma_j
\end{equation}
with \textit{any} (!!!) even apriori measure, one has positive expectations and positive correlations of the $\sigma^A$ which is GKS inequalities for general even measures.

We'd be remiss if we left the subject Ginibre's wonderful paper without mentioning two other examples he gives of Ginibre systems that are not relevant to Wells, although one will appear later. The first is to note that he proves that if $d\mu$ is a product of rotation invariant measures on circles, the set of functions $\cos(\sum_{j=1}^{n}m_j\theta_j)$ is a Ginibre system. This and some extensions are essentially half the correlation inequalities for plane rotors.

The second is related to an 1882 paper of Chebyshev \cite{Cheb} (which I don't think Ginibre knew about when he wrote his 1970 paper) which contained what is probably the earliest correlation inequality: Chebyshev proved that if $f, g$ are two monotone functions on $[0,1]$, then
\begin{equation}\label{2.7}
  \int_{0}^{1} f(x)g(x)\,dx \ge \int_{0}^{1} f(x)\,dx \int_{0}^{1} g(x)\,dx
\end{equation}
Ginibre proved that for any (not necessarily even) positive probability measure on $\bbR$, the set $\calF$ of all positive monotone functions is a Ginibre family.  The proof is again very easy.  This is a sort of poor man's FKG inequalities.

This completes our review of Ginibre, so we turn to Wells' work.  There is a simple extension of Ginibre's method in Wells' thesis \cite{WellsTh} that allows comparison of measures. Given two probability measures, $\mu$ and $\nu$ on a locally compact space, $X$, we say that $\mu$ \textit{Wells dominates} $\nu$, written  $\mu\triangleright\nu$ or $\nu\triangleleft\mu$ with respect to a class of continuous functions $\calF$ (with all moments of all $f\in\calF$ finite with respect to both measures; not needed if $X$ is compact) if for all $n$ and all $f_1, f_2,\dots,f_n$ and all $2^n$ choices of $\pm$, we have that
\begin{equation}\label{2.8}
  \int \int (f_1(x)\pm f_1(y))\dots (f_n(x)\pm f_n(y)) d\mu(x) d\nu(y) \ge 0
\end{equation}

We will be most interested in case $X=\bbR$, $\mu$ and $\nu$ are both even measures with all moments finite and $\calF$ has the single function $f(x)=x$ in which case the condition takes the form %
\begin{equation}\label{2.9}
  \int_\bbR \int_\bbR (x+y)^n (x-y)^m    d\mu(x) d\nu(y) \ge 0
\end{equation}
for all non-negative integers, $n$ and $m$ in which case we use the symbol $\triangleleft$ without being explicit about $\calF$.  Since the measures are even, one need only check this when $n+m$ is even.  It is trivial if both are even, so we only need worry about the case that both are odd. Since the measures are different, we don't have the exchange symmetry that makes the integral vanish if both are odd but symmetry under $y\mapsto -y$ implies invariance under interchange of $m$ and $n$, so we need only check for $m\ge n$.  We'll see examples later.

Extending the Ginibre machine is effortless.  It is easy to prove that
\begin{theorem} [Wells \cite{WellsTh}] \lb{T2.5} (a) If $\mu\triangleleft\nu$ for a set of functions $\calF$, the same is true for the Ginibre cone $\calC(\calF)$.

(b) If for $j=1,\dots,n$, $\mu_j\triangleleft\nu_j$ for probability measures on spaces $X_j$ with respect to sets of functions $\calF_j$ on $X_j$, then for the measures on $\prod_{j=1}^{n}X_j$ and the set of functions $\cup_{j=1}^n\calF_j$, one has that $\otimes_{j=1}^n\mu_j\triangleleft\otimes_{j=1}^n\nu_j$.

(c) If $\mu\triangleleft\nu$ for probability measures on a space $X$ with respect to a set of functions $\calF$ on $X$, if $-H\in\calF$ and if $\mu_H$, $\nu_H$ are Gibbs measures, then $\mu_H\triangleleft\nu_H$ for $\calF$.

(d) If $\mu\triangleleft\nu$ with respect to a set of functions $\calF$, then for every $f\in\calF$, we have that
\begin{equation}\label{2.10}
  \int f(x)\, d\mu(x) \le \int f(x)\, d\nu(x)
\end{equation}
\end{theorem}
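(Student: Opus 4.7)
The proof of all four parts parallels Ginibre's arguments for Theorems \ref{T2.1}--\ref{T2.3}, with the sole modification that one integrates against $d\mu(x)\,d\nu(y)$ in place of $d\mu(x)\,d\mu(y)$. The key observation that unlocks the whole theorem is that none of Ginibre's manipulations used the fact that the two marginals were equal: the algebraic identity \eqref{2.3}, the Fubini factorization, and the power-series expansion of the Gibbs factor all go through verbatim.

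For part (a), I would first reduce by positive linearity to the case where each of the given functions in $\calC(\calF)$ is a product of elements of $\calF$, then iterate the Ginibre identity \eqref{2.3} on each factor $F_i(x) \pm F_i(y)$ to rewrite it as a positive combination of products of atomic factors $(g(x) \pm g(y))$ with $g \in \calF$. Multiplying out yields the integrand of \eqref{2.8} as a positive combination of $\calF$-Wells integrands, each non-negative by hypothesis. Part (b) is a Fubini calculation: group each $f_i \in \cup_j \calF_j$ by its factor index $j$, and the $\prod_j X_j$ integral factors as a product over $j$ of $X_j \times X_j$ integrals, each non-negative by the corresponding $\mu_j \triangleleft \nu_j$.

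For part (c), setting $g := -H \in \calF$ and dropping the positive normalizations $Z_\mu, Z_\nu$, I reduce to showing
\begin{equation*}
\int\!\!\int \prod_{i=1}^N (f_i(x)\pm f_i(y))\, e^{g(x)+g(y)}\,d\mu(x)\,d\nu(y) \ge 0.
\end{equation*}
Expanding the exponential as $\sum_{M \ge 0}\tfrac{1}{M!}(g(x)+g(y))^M$ and interchanging sum with integral, each summand is itself a Wells integrand of the form \eqref{2.8} for $\calF$ with $M$ additional factors $(g(x)+g(y))$ appended (all with plus signs), and is therefore non-negative by $\mu \triangleleft \nu$ applied to the enlarged collection of $N+M$ functions drawn from $\calF$. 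Summing term by term gives the result. For part (d), I specialize \eqref{2.8} to $n=1$ with the minus sign: since $\mu \triangleleft \nu$ means $\nu$ Wells dominates $\mu$, i.e.\ \eqref{2.8} holds with the roles of $\mu$ and $\nu$ swapped, one reads off $\int f\,d\nu - \int f\,d\mu \ge 0$.

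The argument is essentially mechanical once one recognizes that Ginibre's three lemmas never invoked $x \leftrightarrow y$ symmetry of the integration measure. The only point beyond bookkeeping is the sum--integral interchange in part (c), which is routine by dominated convergence in the compact case and in the non-compact case follows from the all-moments-finite assumption on $\calF$ combined with whatever integrability of $e^{-H}$ is already required for the Gibbs measures $\mu_H, \nu_H$ to be defined in the first place.
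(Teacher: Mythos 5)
Your proof is correct and is essentially the approach the paper intends: the paper disposes of Theorem~\ref{T2.5} with the remark that ``extending the Ginibre machine is effortless,'' relying precisely on the observation you make explicit, namely that the proofs of Theorems~\ref{T2.1}--\ref{T2.3} never use the $x\leftrightarrow y$ symmetry of the product measure and so go through verbatim for $d\mu(x)\,d\nu(y)$. Your detailed fill-in (iterating \eqref{2.3} for (a), Fubini for (b), expanding the exponential and absorbing the extra plus-sign factors $(g(x)+g(y))$ for (c), and the $n=1$ minus-sign specialization with $\mu,\nu$ in the roles dictated by the definition of $\triangleleft$ for (d)) is exactly the argument the paper leaves to the reader.
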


This immediately implies that

\begin{corollary} [Wells \cite{WellsTh}] \lb{C2.6} If for $j=1,\dots,n$, $\mu_j\triangleleft\nu_j$ for probability measures on spaces $X_j$ with respect to sets of functions $\calF_j$ on $X_j$,then if $-H\in\calC(\cup_{j=1}^n\calF_j)$ and if $\mu_H, \nu_H$ are formed from the underlying product measures $\otimes_{j=1}^n\mu_j$ and $\otimes_{j=1}^n\nu_j$, then for all $F\in\calC(\cup_{j=1}^n\calF_j)$, one has that $\int f(x)\, d\mu_H(x) \le \int f(x)\, d\nu_H(x)$. In particular, if each $X_j=\bbR$, (so implicitly $F_j$ is the single function $\sigma_j$) and if $H$ has the general ferromagnetic Ising form, \eqref{2.6} with all $J(A)\ge 0$, then for all $A\subset 2^{\{1,\dots,n\}}$ one has that
\begin{equation}\label{2.11}
  \jap{\sigma^A}_{\mu_H} \le \jap{\sigma^A}_{\nu_H}
\end{equation}
\end{corollary}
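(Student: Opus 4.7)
The plan is to chain together the four parts of Theorem~\ref{T2.5} in sequence, with no new technical ingredient. First I would identify what is meant by $\calF_j$ lifted to the product space: each $f\in\calF_j$ is regarded as a function on $\prod_k X_k$ depending only on the $j$-th coordinate, so that the union $\cup_j\calF_j$ makes sense as a set of continuous functions on $\prod_k X_k$ and the integrals in \eqref{2.8} are finite under the stated moment hypotheses.

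Next, apply Theorem~\ref{T2.5}(b) to the hypotheses $\mu_j\triangleleft\nu_j$ to conclude that
\begin{equation*}
  \otimes_{j=1}^n\mu_j\,\triangleleft\,\otimes_{j=1}^n\nu_j
\end{equation*}
with respect to the class $\cup_{j=1}^n\calF_j$ on the product space. Then apply Theorem~\ref{T2.5}(a) to extend the domination from $\cup_{j=1}^n\calF_j$ to its Ginibre cone $\calC(\cup_{j=1}^n\calF_j)$; this is the step that lets us feed complicated polynomial expressions in the coordinate functions into inequality \eqref{2.8}. Since by hypothesis $-H\in\calC(\cup_{j=1}^n\calF_j)$, we may now invoke Theorem~\ref{T2.5}(c) with this Hamiltonian to promote the Wells domination to the Gibbs measures:
\begin{equation*}
  \mu_H\,\triangleleft\,\nu_H\quad\text{with respect to }\calC(\cup_{j=1}^n\calF_j).
\end{equation*}
Finally, for any $F\in\calC(\cup_{j=1}^n\calF_j)$, Theorem~\ref{T2.5}(d) applied to the pair $(\mu_H,\nu_H)$ gives the desired inequality $\int F\,d\mu_H\le\int F\,d\nu_H$.

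For the concluding Ising-type statement, I would simply verify the membership conditions. With each $X_j=\bbR$ and $\calF_j=\{\sigma_j\}$, every monomial $\sigma^A=\prod_{j\in A}\sigma_j$ lies in $\calC(\cup_{j=1}^n\calF_j)$ because it is a product of elements of the union. The ferromagnetic assumption $J(A)\ge 0$ in \eqref{2.6} then places $-H=\sum_A J(A)\sigma^A$ in the Ginibre cone as well, so the hypothesis of the first part is met. Taking $F=\sigma^A$ in the general conclusion yields \eqref{2.11}.

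The only ``obstacle'' here is notational rather than mathematical: one must be careful that the inductive passage from part~(a) of Theorem~\ref{T2.5} (which is stated for a single measure pair) really does apply to $\otimes\mu_j$ and $\otimes\nu_j$, and that the ferromagnetic Hamiltonian of general multi-index form (as mentioned after \eqref{1.2}) also lands in the Ginibre cone, which it does since raising $\sigma_j$ to a non-negative integer power is the same as taking a product of copies of $\sigma_j$. Given this, the corollary is an immediate functorial consequence of the Wells analogues of Ginibre's four closure properties.
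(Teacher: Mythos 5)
Your proof is correct and follows exactly the route the paper intends: the paper merely says the corollary "immediately implies" from Theorem~\ref{T2.5}, and your chaining of parts (b), then (a), then (c), then (d), together with the observation that the monomials $\sigma^A$ and the ferromagnetic Hamiltonian all lie in the Ginibre cone, is precisely the content of that "immediately." No gaps.
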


Thus by the definition, \eqref{1.4}, of Ising domination, we see that if $\mu\triangleleft\nu$, then $\nu$ Ising dominates $\mu$.

Of course, $\triangleleft$ is a binary relation and it is tempting to think of it as a partial order on measures on $\bbR$ with all moments finite. Indeed, it is certainly reflexive. It is almost antisymmetric.  It is easy to see that $\mu\triangleleft\nu$ and $\nu\triangleleft\mu$ if and only if $\mu$ and $\nu$ have the same moments.  Thus it is antisymmetric among the measures of compact support or among measures obeying $\int e^{Ax^2}\, d\mu(x)<\infty$ for some $A>0$ but not among all measures with finite moments because of the possibilities of measures non-unique for the moment problem. But we do not know the following

\textbf{Question 1} Is Wells relation transitive among all even measures on $\bbR$?  How about among all measures on a general topological space if $\calF$ is rich enough?

Since Ising domination is trivially transitive, for applications, this lack isn't so important.

\section{The Wells Comparison Theorem} \lb{s3}

Given an even measure probability, $\mu$ on $\bbR$ and $s>0$, we define its scaling by
\begin{equation}\label{3.1}
  \mu^{(s)}[A] = \mu[s^{-1}A]
\end{equation}
Then the Bernoulli measure $b_S$ defined after \eqref{1.3} obeys $b_S= (b_1)^{(S)}$.

Even if it is not true that $\nu\triangleleft\mu$, it can happen that $\nu^{(s)}\triangleleft\mu$ for $s$ sufficiently small.  In the next section, we will see that this implies a bound on transition temperatures, so such comparison results are interesting.  The main result of this section implies that any two non-trivial measures of compact support are comparable in this sense.  Well's most important result is

\begin{theorem} [Wells Comparison Theorem \cite{WellsTh}] \lb{T3.1} Let $d\mu$ be an even probability measure on $\bbR$ with compact support that is not a point mass at $0$.  Then there are two strictly positive numbers, $T_-(\mu)$ and $T_+(\mu)$, so that $\mu\triangleleft b_S$ if and only if $S\ge T_+$ and $b_S\triangleleft\mu$ if and only if $S\le T_-$.  Moreover
\begin{equation}\label{3.2}
  T_+ = \sup\{s\,\mid\, s\in\supp(\mu)\}
\end{equation}
and
\begin{equation}\label{3.3}
  S\le T_-\iff \forall_{n\in\bbN} \int_\bbR (x^2-S^2)^n\,d\mu(x) \ge 0
\end{equation}
\end{theorem}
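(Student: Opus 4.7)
The plan is first to reduce each Wells-domination condition to integrals of explicit polynomials against $d\mu$ and then to analyse those integrals. Since $b_S=\tfrac12(\delta_{-S}+\delta_S)$ is supported on two points, one can integrate out whichever variable carries $b_S$ in the defining inequality \eqref{2.9}. Evenness of $\mu$ kills the cases with $n+m$ odd; when $n$ and $m$ are both even the integrand reduces to the pointwise non-negative polynomial
\[
P_{n,m}(y):=\tfrac12[(y+S)^n(y-S)^m+(y-S)^n(y+S)^m]\ge 0,
\]
so the inequality is automatic. Only the cases $n=2a+1$, $m=2b+1$ remain, and direct algebra using $(y+S)(y-S)=y^2-S^2$ yields
\[
P_{2a+1,2b+1}(y)=(y^2-S^2)^{2c+1}R_k(y),\quad c:=\min(a,b),\ k:=|a-b|,
\]
where $R_k(y):=\tfrac12[(y+S)^{2k}+(y-S)^{2k}]$ is strictly positive on $\bbR$ and is a polynomial in $y^2$ with non-negative coefficients. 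Tracking the sign $(-1)^m$ that appears when $b_S$ carries the $x$-variable, $b_S\triangleleft\mu$ becomes $\int (y^2-S^2)^{2c+1}R_k(y)\,d\mu(y)\ge 0$ for every $c,k\ge 0$, while $\mu\triangleleft b_S$ is the same family of integrals all $\le 0$.

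For \eqref{3.2} I claim $\mu\triangleleft b_S\iff S\ge T_+$. Sufficiency is pointwise: on $\supp(\mu)$ one has $y^2\le S^2$, so $(y^2-S^2)^{2c+1}\le 0$ and the integrand is non-positive. For necessity, suppose $S<T_+$; with $c=0$ I aim to find $k$ with $\int(y^2-S^2)R_k(y)\,d\mu>0$, contradicting $\mu\triangleleft b_S$. Pushing $\mu$ forward to $\tilde\mu$ on $t=y^2\in[0,T_+^2]$ and writing $Q_k(t):=R_k(\sqrt t)$, one checks that $Q_k$ is an increasing polynomial of degree $k$ with
\[
\frac{Q_k(t_0)}{Q_k(S^2)}\;\ge\;\left(\frac{\sqrt{t_0}+S}{2S}\right)^{2k}\longrightarrow\infty
\]
exponentially in $k$ for any $t_0>S^2$. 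Choosing such a $t_0\in\supp(\tilde\mu)$ (possible since $T_+^2>S^2$) and a small neighbourhood $U$ of positive $\tilde\mu$-mass, the positive contribution from $U$ grows like $Q_k(t_0)$ while the non-positive contribution from $[0,S^2]$ is bounded in magnitude by $S^2\,Q_k(S^2)$, so the integral is positive for large $k$.

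For \eqref{3.3} I prove equivalence with the condition $\int(y^2-S^2)^n\,d\mu\ge 0$ for all $n\in\bbN$. One direction takes $k=0$ in the reduction (so $R_0\equiv 1$), giving odd-power moments $\ge 0$; even-power ones are trivial. For the converse, expand
\[
R_k(y)=\sum_{i=0}^{k}a_{k,i}\,(y^2-S^2)^i
\]
by substituting $y^{2j}=((y^2-S^2)+S^2)^j$ into $R_k$; the resulting $a_{k,i}$ are non-negative sums of binomial-coefficient products and powers of $S^2$. Multiplying by $(y^2-S^2)^{2c+1}$ and integrating term-by-term gives $\int(y^2-S^2)^{2c+1}R_k\,d\mu\ge 0$. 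The set of valid $S$ is an interval $[0,T_-]$ because the analogous expansion $(y^2-S'^2)^n=((y^2-S^2)+(S^2-S'^2))^n$ for $S'<S$ also has non-negative coefficients. Positivity $T_->0$ follows from a direct estimate: since $\mu\ne\delta_0$, some $s_0>0$ satisfies $\tilde\mu([s_0,T_+^2])=:\alpha>0$, and for $S$ so small that $(s_0-S^2)/S^2\ge 1/\alpha$, the positive contribution $(s_0-S^2)^n\alpha$ dominates the worst-case negative contribution $S^{2n}$ from $[0,S^2]$ uniformly in $n\ge 1$.

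The main obstacle is the necessity direction in the $T_+$ characterisation: one must verify that high-degree test polynomials $R_k$ always detect any sliver of $\supp(\mu)$ outside $[-S,S]$, and this is precisely what the exponential ratio growth of $Q_k(t_0)/Q_k(S^2)$ provides. The algebraic reductions and the $T_-$ equivalence, once one thinks to factor $P_{2a+1,2b+1}$ and to expand $R_k$ in the basis $\{(y^2-S^2)^i\}$, are then essentially formal.
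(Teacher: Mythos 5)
Your proof is correct, but it diverges from the paper's argument in two substantive places, so a comparison is worth recording.

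For the necessity direction in the $T_+$ characterization, the paper simply observes that $\mu\triangleleft b_S$ forces $\int x^{2N}\,d\mu\le S^{2N}$ (via the $(n,m)=(2N,0)$ moment, or equivalently Theorem~\ref{T2.5}(d) applied to $x^{2N}\in\calC(\{x\})$), and then takes $2N$-th roots and lets $N\to\infty$. Your argument instead fixes $c=0$ and drives $k\to\infty$, using the exponential growth of $Q_k(t_0)/Q_k(S^2)$ for $t_0>S^2$ to detect any mass outside $[-S,S]$. Both are valid; the paper's is a two-line argument, while yours is a quantitative exhaustion estimate. The moment argument is clearly the economical choice here.

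For the $T_-$ characterization, the two routes are genuinely different. The paper factors $(x\pm S)^n(x\mp S)^m=(x^2-S^2)^n(x\mp S)^{m-n}$ and then invokes a clean sign lemma (Lemma~\ref{L3.2}, with $f=(x^2-S^2)^n$ and $g$ the even, increasing polynomial $(x+S)^{m-n}+(x-S)^{m-n}$ on $[0,\infty)$) to reduce the mixed integral to a multiple of $\int(x^2-S^2)^n\,d\mu$. You instead expand $R_k(y)=\sum_i a_{k,i}(y^2-S^2)^i$ with $a_{k,i}\ge 0$ (substituting $y^{2j}=((y^2-S^2)+S^2)^j$) and sum term by term. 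Your expansion has the side benefit of giving an entirely algebraic proof, and, more importantly, the same device with $(y^2-S'^2)^n=((y^2-S^2)+(S^2-S'^2))^n$ for $S'<S$ establishes directly that $\{S:b_S\triangleleft\mu\}$ is a downward-closed interval. The paper leaves this monotonicity in $S$ implicit; since transitivity of $\triangleleft$ is posed as an open problem (Question~1), making the interval structure explicit by a positive-coefficient expansion as you do is a real improvement in rigor rather than mere pedantry. The positivity $T_->0$ is handled by essentially the same two-region estimate in both proofs.

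One small point to verify in your $T_->0$ argument: you need $\frac{s_0-S^2}{S^2}\ge 1$ so that the $n$-th power is monotone increasing in $n$; this is automatic from $\frac{s_0-S^2}{S^2}\ge\frac{1}{\alpha}\ge 1$ since $\alpha\le 1$, but it is worth stating.
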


\begin{remarks} 1. Bricmont-Lebowitz-Pfister \cite{BLP} state the existence of $T_-$ part of this theorem, quoting Wells and providing his proof.

2.  There is a very different order from Wells order defined by van Beijeren-Sylvester \cite{vBS} (discussed further in \cite[Section 2.2]{PTLG}) that also implies Ising domination but it has the serious flaw that if $0\in\supp(\nu)$, then for no $T>0$ does $\nu$ dominate $b_T$ in their order.

3.  The proof below is essentially that of Wells.
\end{remarks}

\begin{lemma} \lb{L3.2} Let $\mu$ be a positive measure on an interval $I\subset\bbR$ (either open or closed at each endpoint).  Let $f,g\in L^2(d\mu)$ and suppose that $g$ is monotone increasing on $I$ and there is $c\in I$ so that $f(x)\le 0$ (resp $f(x)\ge 0$) if $x\le c$ (resp $x\ge c$). Then
\begin{equation}\label{3.4}
  \int f(x)g(x) \, d\mu(x) \ge g(c) \int f(x)\,d\mu(x)
\end{equation}
\end{lemma}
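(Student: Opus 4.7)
The plan is a one-line trick of the kind used in Chebyshev's inequality: rewrite the desired inequality as the nonnegativity of a single integral whose integrand is manifestly nonnegative pointwise.

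Specifically, I would subtract $g(c)\int f\,d\mu$ from both sides and reinterpret the result as
\begin{equation*}
  \int_I f(x)\bigl(g(x) - g(c)\bigr)\,d\mu(x) \ge 0.
\end{equation*}
The key observation is that the integrand is pointwise nonnegative on $I$. Indeed, split $I$ at $c$: for $x \le c$ we have $g(x) - g(c) \le 0$ (by monotonicity of $g$) and $f(x) \le 0$ by hypothesis, so the product is $\ge 0$; for $x \ge c$ both factors are $\ge 0$, so again the product is $\ge 0$. The integral is therefore nonnegative, which is precisely \eqref{3.4}. Integrability is not an issue since $f, g \in L^2(d\mu)$ implies $fg \in L^1(d\mu)$ by Cauchy-Schwarz, and $f \in L^2 \subset L^1_{\text{loc}}$ together with boundedness of $g(c)$ gives $g(c) f \in L^1$ on sets where the former makes sense (and one can always cut off and pass to the limit if needed).

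There is no real obstacle; the lemma is essentially a restatement of the sign structure of $f$ matched against the monotonicity of $g$, with the threshold $c$ serving as the common pivot. The only mild care concerns the case where $g$ is not defined at $c$ (if $c$ is an endpoint not in $I$), but the inequality is stated with $c \in I$, so $g(c)$ is well-defined. The symmetric version with $f$ having the opposite sign pattern and $g$ decreasing follows by the same argument (or by replacing $(f,g)$ with $(-f,-g)$). I expect this lemma will be used in the proof of Theorem \ref{T3.1} to handle monotonicity arguments in moment comparisons, e.g.\ with $f(x) = x^{2n} - S^{2n}$ or similar odd/even-symmetric test functions matched to the threshold built into $b_S$.
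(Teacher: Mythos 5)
Your proposal is exactly the paper's proof: the authors also observe that $f(x)\bigl(g(x)-g(c)\bigr)$ is pointwise nonnegative (negative times negative for $x\le c$, positive times positive for $x\ge c$) and integrate. You've simply spelled out the sign-check and integrability details that the paper leaves implicit.
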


\begin{proof} The function $f(x)[g(x)-g(c)]$ is positive so its integral is positive which is the claim.
\end{proof}

\begin{proof} [Proof of Theorem \ref{T3.1}]  We first prove the existence of $T_+$ and \eqref{3.2}.  If $S\ge \sup\{s\,\mid\, s\in\supp(\mu)\}$, then, for the integrand in \eqref{2.9} to be positive, we need that $(S+y)^n(S-y)^m+(S+y)^m(S-y)^n\ge 0$ for all $y\ge 0$ in $\supp(\mu)$. If $\mu(\{0\})>0$, there is an additional term of $S^{n+m}\mu(\{0\})$ in the right hand side, but that is also positive, so for such $S$, we have that $\mu\triangleleft b_S$.

On the other hand, if $\mu\triangleleft b_S$, we have that $\int x^{2N}\,d\mu(x) \le S^{2N}$, so, taking $2N$th roots and then $N\to\infty$, we see that $S\ge \sup\{s\,\mid\, s\in\supp(\mu)\}$ which proves the formula for $T_+$.

Next we will prove that
\begin{equation}\label{3.5}
  b_S\triangleleft\mu\iff \forall_{n \text{ odd}}\int_\bbR (x^2-S^2)^n\,d\mu(x) \ge 0
\end{equation}
Taking $n=m$ in the basic integral, we see that
\begin{equation}\label{3.6}
  b_S\triangleleft\mu\Rightarrow \forall_{n \text{ odd}}\int_\bbR (x^2-S^2)^n\,d\mu(x) \ge 0
\end{equation}

Now look at the basic integral when $\nu=b_S$ and $m>n$ with both odd. Since $(x\pm S)^n(x\mp S)^m = (x^2-S^2)^n(x\mp S)^{m-n}$ we see that the integral in question is
\begin{align}
 \tfrac{1}{2} \int(x^2&-S^2)^n \left[(x+S)^{m-n}+(x-S)^{m-n}\right] \, d\mu(x)  \nonumber\\
      &=\int (x^2-S^2)^n \left[(x+S)^{m-n}+(x-S)^{m-n}\right] \, d\tilde{\mu}(x) \lb{3.7}
\end{align}
where $\tilde{\mu}$ is the measure restricted to $(0,\infty)$ plus $\tfrac{1}{2}\mu(\{0\})\delta_0$.  By the binomial theorem, the polynomial $Q_{2k}(y)=(y+S)^{2k}+(y-S)^{2k}$ only has even degree terms with only positive coefficients so the function in $[\cdot]$ in the last equation is monotone on $I=[0,\infty)$.  Applying the lemma with $c=S$, we see that
\begin{equation} \lb{3.8}
  \int_\bbR \int_\bbR (x+y)^n (x-y)^m    d\mu(x) d\nu(y) \ge (2S)^{m-n}\int_\bbR (x^2-S^2)^n\,d\mu(x)
\end{equation}
Thus, we have proven \eqref{3.5}.

Finally, we show that $T_->0$.  First, pick $a>0$ so that $\mu([a,\infty)) > 0$. Pick $0<b<a$ so small that
\begin{equation} \lb{3.9}
  \frac{b^2}{a^2-b^2} \le \min\left(1, 2\mu([a,\infty))\right)
\end{equation}
possible since the left side goes to zero as $b\downarrow 0$. Since the integrand is positive on $[b,a]$, we have that for all $k\in\bbN$
\begin{equation} \lb{3.10}
  \int (x^2-b^2)^{2k+1} d\mu(x) \ge -(b^2)^{2k+1} + 2(a^2-b^2)^{2k+1}\mu([a,\infty))
\end{equation}
\begin{equation} \lb{3.11}
     \null\qquad  = 2(a^2-b^2)^{2k+1}\left[2\mu([a,\infty))-\left(\frac{b^2}{a^2-b^2}\right)^{2k+1}\right] \ge 0
\end{equation}
by the choice of $b$.  Thus $T_-\ge b>0$.
\end{proof}

One consequence of the theorem is
\begin{equation}\label{3.12}
  T_- \le \left(\int_\bbR x^2\,d\mu(x)\right)^{1/2}
\end{equation}
It is an interesting question when one has equality.  One would like as good a lower bound on $T_-$ as possible which can yield good lower bounds on transition temperatures.  Often one has equality in \eqref{3.12} in which case we will say that $T_-$ is \textit{canonical} for $\mu$.

\begin{example} \lb{E3.3}  We consider spins taking three values.  For $0\le\lambda\le 1$, consider the probability measure supported by the three points $\{0,\pm1\}$ given by
\begin{equation}\label{3.13}
  d\mu_\lambda = \tfrac{\lambda}{2}\left(\delta_1+\delta_{-1}\right)+(1-\lambda)\delta_0
\end{equation}
For $\lambda=2/3$, which is equal weights, this is called (normalized) spin $1$. Then
\begin{align}\label{3.14}
  \jap{(x^2-T^2)^{2m+1}}_\lambda & =  (1-T^2)^{2m+1}\lambda-(1-\lambda)T^{2(2m+1)} \nonumber\\
                                 & \ge 0  \iff \left[\frac{1-T^2}{T^2}\right]^{2m+1} \ge \frac{1-\lambda}{\lambda} \nonumber \\
                                 & \iff \frac{1-T^2}{T^2} \ge \left(\frac{1-\lambda}{\lambda}\right)^{1/2m+1}
\end{align}
If $\lambda\le 1/2$, then $(1-\lambda)/\lambda\ge 1$ and the maximum on the right side of the last formula occurs for $m=0$ while, if $\lambda\ge 1/2$, then $(1-\lambda)/\lambda\le 1$ and we get the maximum as $m\to\infty$.  Thus, we find that
\begin{equation}\label{3.15}
  T_-(\lambda) = \left\{
                   \begin{array}{ll}
                     \sqrt{\lambda}, & \hbox{ if }\lambda\le\tfrac{1}{2} \\
                     \sqrt{\tfrac{1}{2}}, & \hbox{ if }\lambda\ge\tfrac{1}{2}
                   \end{array}
                 \right.
\end{equation}
So we see there are cases where $T_-= \jap{x^2}^{1/2}=\sqrt{\lambda}$ and other cases where the inequality is strict. Note also that at $\lambda=1/2$, the integral $\jap{(x^2-T_-^2)^{2m+1}}_\lambda$ vanishes for all $n$, a sign that the distribution of $x^2-T_-^2$ is symmetric about $0$.
\end{example}

In the remainder of this Note, we will discuss bounds on transition temperatures and then two interesting classes: in Section \ref{s4}, the distribution of a single component of a $D$-vector model and in Section \ref{s5}, the spin $S$ spin.

\section{Bounds on Transition Temperatures} \lb{s3A}

Fix a translation invariant ferromagnetic interaction, $J(i-j)\ge 0$ and an even apriori measure, $\mu$.  Let $T_c(\mu)$ be the transition temperature for the model defined as the unique temperature (which may be zero if there is no phase transition!) so that for larger temperatures, the two point infinite volume free boundary condition state $\jap{\cdot}$ has
\begin{equation}\label{3A.1}
  \lim_{j\to\infty} \jap{\sigma_j\sigma_0} = 0
\end{equation}
We want to see what $\nu^{(s)}\triangleleft\mu$ implies about the relation of $T_c(\mu)$ and $T_c(\nu)$ (a similar analysis holds with other possible definitions of transition temperature).

The arguments below while stated for Wells order only depend on Ising domination.  Making the temperature and measure explicit, with $\jap{\cdot}_{T,\mu}$ the infinite volume free BC state, we note that by the definition of $\nu^{(s)}$, we have that (because we are assuming only pair interactions and because temperature appears as $\sigma_i\sigma_j/T$)
\begin{equation}\label{3A.2}
  \jap{\sigma^A}_{T,\nu^{(s)}} = s^{|A|}\jap{\sigma^A}_{T/s^2,\nu}
\end{equation}
Thus, since Wells order implies Ising domination, we see that $\nu^{(s)}\triangleleft\mu$ implies that
\begin{equation}\label{3A.3}
  \jap{\sigma^A}_{T,\mu} \ge s^{|A|}\jap{\sigma^A}_{T/s^2,\nu}
\end{equation}

Therefore, if $T\le s^2T_c(\nu)$, we see that \eqref{3A.1} fails for $\jap{\sigma_j\sigma_0}_{T/s^2,\nu}$ and so by \eqref{3A.3} for $\jap{\sigma_j\sigma_0}_{T,\mu}$. We have thus proven that

\begin{proposition} \lb{P3A.1} Let  $\nu^{(s)}\triangleleft\mu$ for two non-trivial even measures.  Then
\begin{equation}\label{3A.4}
  T_c(\mu) \ge s^2 T_c(\nu)
\end{equation}
In particular
\begin{equation}\label{3A.5}
  T_c(\mu) \ge T_-(\mu)^2 T_c(\text{\textit{classical Ising}})
\end{equation}
so if $T_-$ is canonical for $\mu$, then
\begin{equation}\label{3A.6}
  T_c(\mu) \ge \jap{x^2}_\mu T_c(\text{\textit{classical Ising}})
\end{equation}
\end{proposition}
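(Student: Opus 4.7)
The plan is to formalize the argument already sketched in the three paragraphs preceding the proposition. Two ingredients are needed: (i) the scaling identity \eqref{3A.2}, which is an algebraic consequence of the definition of $\nu^{(s)}$ together with the assumption that the Hamiltonian contains only pair couplings $J(i-j)\sigma_i\sigma_j/T$; and (ii) Wells-to-Ising domination, which is exactly the content of Corollary \ref{C2.6}.

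First I would verify \eqref{3A.2} in finite volume. Under the change of variables $\sigma_j\mapsto s\sigma_j$ the product measure $\otimes_{j\in\Lambda}d\nu^{(s)}(\sigma_j)$ pulls back to $\otimes_{j\in\Lambda}d\nu(\sigma_j)$, the pair term becomes $s^2 J(i-j)\sigma_i\sigma_j/T$, so the effective temperature is $T/s^2$, and the observable $\sigma^A$ picks up the scalar $s^{|A|}$. This gives the identity at finite volume, and it passes to the infinite-volume free-BC limit by the standard monotonicity argument: by GKS, $\jap{\sigma^A}_{T,\mu,\Lambda}$ is monotone in $\Lambda$ with free BC, so the limit exists for each $T$ and $\mu$, and the identity survives.

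Next, by Corollary \ref{C2.6}, the hypothesis $\nu^{(s)}\triangleleft\mu$ combined with $-H=\sum_{A}J(A)\sigma^A/T\in\calC(\{\sigma_j\})$ yields the Ising-domination inequality $\jap{\sigma^A}_{T,\nu^{(s)},\Lambda}\le\jap{\sigma^A}_{T,\mu,\Lambda}$ for every finite $\Lambda$; passing to $\Lambda\to\infty$ and combining with \eqref{3A.2} produces \eqref{3A.3}. From here the contrapositive is immediate: if $T<s^2 T_c(\nu)$ then $T/s^2<T_c(\nu)$, so $\limsup_{j\to\infty}\jap{\sigma_j\sigma_0}_{T/s^2,\nu}>0$, and \eqref{3A.3} with $A=\{0,j\}$ (giving an $s^2$ prefactor) then forces $\limsup_{j\to\infty}\jap{\sigma_j\sigma_0}_{T,\mu}>0$, hence $T<T_c(\mu)$. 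Letting $T\uparrow s^2 T_c(\nu)$ gives \eqref{3A.4}.

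The specializations are then immediate: \eqref{3A.5} comes from taking $\nu=b_1$ (so $\nu^{(s)}=b_s$) and $s=T_-(\mu)$, which by Theorem \ref{T3.1} satisfies $b_{T_-(\mu)}\triangleleft\mu$; \eqref{3A.6} is just the definition of $T_-$ being canonical, i.e.\ $T_-(\mu)^2=\jap{x^2}_\mu$. I do not expect any genuine obstacle: the argument is essentially algebraic once one has the scaling identity and Corollary \ref{C2.6}. The only point that demands mild care is the interchange of the infinite-volume limit with the $j\to\infty$ limit defining $T_c$, but for ferromagnetic free-BC states both are monotone and commute by the usual GKS-based argument.
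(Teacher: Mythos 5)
Your proposal is correct and follows essentially the same route as the paper: the scaling identity \eqref{3A.2} by rescaling $\sigma_j\mapsto s\sigma_j$, Ising domination via Corollary \ref{C2.6} to get \eqref{3A.3}, and the comparison of transition temperatures by contraposition. The extra details you supply (the finite-volume change of variables, the GKS monotonicity justification for passing to the infinite-volume limit) are simply made explicit where the paper treats them as standard.
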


This last putative inequality is especially interesting because the mean field transition temperature (see, for example, \cite[Section II.13]{SMLG} or \cite[Section 2.6]{PTLG}) is given by
\begin{equation}\label{3A.7}
  T_{MF}(\mu) = \jap{x^2}_\mu \sum_{j} J(j)
\end{equation}
for a pair interacting ferromagnetic model.  Thus one has equality in \eqref{3A.6} if $T_c$ is replaced by $T_{MF}$.  It is known \cite{BKLS, PTLG} that mean field theory is exact in the infinite dimensional limit of nearest neighbor generalized Ising models (in the sense that $T_c/T_{MF}\to 1$ as $d\to\infty$ for the model on $\bbZ^d$).  \cite{BKLS} only discusses spin $1/2$ but to get that equality holds in \eqref{3A.6}, it suffices to get a MF lower bound on $T_c(\mu)$ and \cite{BKLS} get that from Fr\"{o}hlich-Simon-Spencer \cite{FSS} whose argument works for any spin (for many models of interest including those of the next two sections, there is also a mean field upper bound on transition temperatures - see \cite[Section 2.6]{PTLG} and \cite{Pearce, Tasaki}).  In event, we see that \textit{if \eqref{3A.6} holds for all ferromagnetic pair interactions, then the constant $\jap{x^2}_\mu$ is best possible}.  A major theme of the rest of the paper is proving \eqref{3A.6} in two classes of models.

One of these is the spin $S$ measure and we want to end this section by noting what we believe is the best prior lower bound on $T_c(\tilde{\mu}_S)$.  To be explicit, for each value of $S=1/2,1,3/2,...$, consider the measure $\tilde{\mu}_S$ which takes $2S+1$ values equally spaced between $-1$ and $1$, each with weight $1/(2S+1)$.  It is interesting to find the square, $T_-(S)^2$, of the Wells $T_-$ associated to $\tilde{\mu}_S$ because if $T_c(S)$ is the transition temperature for a model with apriori measure $\tilde{\mu}_S$ and some fixed two point ferromagnetic interaction, then Proposition \ref{P3A.1} (and the easy $T_+(S)=1$) implies that
\begin{equation}\label{3A.8}
  T_-(S)^2  T_c\left(\tfrac{1}{2}\right) \le T_c(S) \le  T_c\left(\tfrac{1}{2}\right)
\end{equation}
(we'll prove in Section \ref{s5} that $\tilde{\mu}_S$ is canonical and compute $\jap{x^2}_{\mu_S}$). So far as we know the best previous result of this genre in the literature is due to Griffiths \cite[eq.(4.23)]{GriffTrick} who proved that

\begin{theorem} [Griffiths \cite{GriffTrick}] \lb{T3A.2} One has that
\begin{equation}\label{3A.9}
   \tfrac{1}{4}  T_c\left(\tfrac{1}{2}\right) \le T_c(S) \le  T_c\left(\tfrac{1}{2}\right)
\end{equation}
\end{theorem}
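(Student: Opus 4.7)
The plan is to invoke Proposition~\ref{P3A.1} twice with suitable Wells dominations between $\tilde{\mu}_S$ and Bernoulli measures.

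For the upper bound: since $\supp(\tilde{\mu}_S)\subseteq[-1,1]$ with $1\in\supp(\tilde{\mu}_S)$, Theorem~\ref{T3.1} and \eqref{3.2} give $T_+(\tilde{\mu}_S)=1$, so $\tilde{\mu}_S\triangleleft b_1$. Applying Proposition~\ref{P3A.1} with $\nu=\tilde{\mu}_S$, $\mu=b_1$, $s=1$ immediately yields $T_c(S)\le T_c(1/2)$.

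For the lower bound: I would apply Proposition~\ref{P3A.1} with $\nu=b_1$, $\mu=\tilde{\mu}_S$, and $s=1/2$, reducing the claim to $b_{1/2}=b_1^{(1/2)}\triangleleft\tilde{\mu}_S$. By the characterization \eqref{3.5} of $T_-$, this amounts to the moment inequality
\begin{equation*}
I_n:=\int_\bbR (x^2-\tfrac{1}{4})^n\,d\tilde{\mu}_S(x)\ge 0\qquad\text{for every odd } n\ge 1.
\end{equation*}
To prove this I would use a pairing argument on the support. Setting $c_k:=(k/S)^2-1/4$ for $k$ in the non-negative part of $\supp(\tilde{\mu}_S)$, I would define the involution
\begin{equation*}
k\longleftrightarrow k^*:=\begin{cases} S-k,& S\in\bbZ,\\ S+\tfrac{1}{2}-k,& S\in\bbZ+\tfrac{1}{2},\end{cases}
\end{equation*}
which in each case is readily checked to preserve the non-negative support. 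For a non-trivial pair $\{k,k^*\}$, AM--GM applied to $k+k^*$ (which equals $S$ or $S+\tfrac{1}{2}$) gives $kk^*\le(k+k^*)^2/4$, and a direct calculation then yields $c_k+c_{k^*}\ge 0$. Since exactly one of $k,k^*$ lies below $S/2$ (with $c\le 0$) and the other above (with $c\ge 0$), this forces $|c_{k^*}|\ge|c_k|$ and hence $c_k^n+c_{k^*}^n\ge 0$ for every odd $n$. Self-paired elements (when present) contribute non-negatively as well: $c_{S/2}=0$ for even integer $S$, and $c_{(S+1/2)/2}>0$ for $S\in\{1/2,5/2,9/2,\ldots\}$.

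Summing the pair contributions with the $\tilde{\mu}_S$-weights, the half-integer case gives $I_n\ge 0$ directly, while the integer case, after bounding the paired sum below by zero, leaves the strictly positive surplus
\begin{equation*}
(2S+1)\,I_n\ge c_0^n+2c_S^n=\frac{2\cdot 3^n-1}{4^n}>0.
\end{equation*}
The main technical obstacle is treating half-integer $S$ uniformly: the naive pairing $k\leftrightarrow S-k$ sends half-integers to integers outside $\supp(\tilde{\mu}_S)$, forcing the use of the shifted involution, which only produces the weaker AM--GM bound $c_k+c_{k^*}\ge(S+1/4)/(2S^2)$. Since this is still strictly positive, the sign conclusion — and hence the proof — is unchanged.
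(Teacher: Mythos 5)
Your proof is correct, but it takes a genuinely different route from the paper's. The paper does not prove Theorem~\ref{T3A.2} at all: it simply attributes the result to Griffiths~\cite{GriffTrick}, and Remark~2 following the statement explains Griffiths' actual method, which is to realize the spin~$S$ apriori measure as a collection of spin~$1/2$ variables coupled by finite ferromagnetic interactions (``frozen together'') and then invoke GKS to conclude Ising domination. That is a construction-plus-GKS argument with no reference to the Wells/Ginibre machinery.

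Your argument instead stays inside the Wells framework. The upper bound is the routine observation $T_+(\tilde{\mu}_S)=1$ from \eqref{3.2}. For the lower bound, you reduce to $T_-(\tilde{\mu}_S)\ge 1/2$ via \eqref{3.5}, i.e.\ to the sign of $\int(x^2-\tfrac{1}{4})^n\,d\tilde{\mu}_S$ for odd $n$, and you establish this by a pairing/involution on the non-negative support: for integer $S$, pairing $k\leftrightarrow S-k$, and for half-odd-integer $S$, the shifted pairing $k\leftrightarrow S+\tfrac12-k$ (correctly noting that the naive reflection would leave the support). In each pair one value of $c_k=(k/S)^2-\tfrac14$ is nonpositive and the other nonnegative, and the AM--GM/completing-the-square bound $c_k+c_{k^*}\ge 0$ then forces $c_k^n+c_{k^*}^n\ge 0$ for odd $n$; fixed points are handled separately and contribute nonnegatively. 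The weight bookkeeping ($0$ has weight $1/(2S{+}1)$, all other $k$ have weight $2/(2S{+}1)$) goes through, and your explicit lower bound $(2S{+}1)I_n\ge(2\cdot 3^n-1)/4^n$ in the integer case is a correct consequence.

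What each approach buys: Griffiths' realization argument is short but ad hoc and gives only the ratio $(1/2)^2=\tfrac14$ coming from the worst endpoint of the realization. Your pairing argument is elementary, self-contained, and sits naturally inside the framework the paper builds, but it establishes only $T_-\ge\tfrac12$. It is, in effect, a lightweight precursor to what the paper does in Section~\ref{s5} and Appendix~\ref{App}: there, a full majorization argument (Karamata's inequality applied to suitably constructed sequences) proves the stronger canonical statement $T_-(\tilde{\mu}_S)^2=a_S=\tfrac{1}{3}\tfrac{S+1}{S}$ for $S\neq 1$, giving the sharper Corollary~\ref{C5.4}. So your route recovers Griffiths' constant with the paper's tools, while the paper reserves the full majorization machinery for the optimal constant.
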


\begin{remarks}  1. This is what Griffiths proves for $S$ an integer.  For $2S=2k+1$ odd, he proves the slightly strong result with $\tfrac{1}{4}$ replaced by $(k+1/2k+1)^2$. Please note that what we call $T_c(S)$, Griffiths denotes $T_c(2S)$, e.g. $T_c(\text{\textit{classical Ising}})$ which we denote  $T_c\left(\tfrac{1}{2}\right)$, he denotes as $T_c(1)$.

2. This paper of Griffiths \cite{GriffTrick} is best known for proving Lee-Yang and GKS inequalities by realizing spin $S$ (normalized so the maximum value is $2S$, a measure we call $\mu_S$) by $2S$ spins with values $\pm 1$ with finite ferromagnetic couplings (given by Figs 2 and 3 in his paper) but he notes that one can also realize them with $S$ (if $S$ is an integer or $S+\tfrac{1}{2}$ if that's an integer) frozen together and then GKS implies that $\tilde{\mu}_S$ Ising dominates $b_{T=1/2}$.
\end{remarks}

\section{Totally Anisotropic D-vector model} \lb{s4}

We turn next to one of the two new results on this subject.  It involves the interesting measure
\begin{equation}\label{4.1}
  d\mu_D(x) =\left[\frac{\Gamma\left(\tfrac{D}{2}\right)}{\sqrt{\pi}\,\Gamma\left(\tfrac{D-1}{2}\right)}\right]
                   (1-x^2)^{\tfrac{1}{2}(D-3)}\chi_{[-1,1]}(x) dx
\end{equation}
This is the distribution of $x_1$ if one looks at a $D$-component unit vector, $\mathbf{x}=(x_1,\dots,x_D)$, distributed with the rotation invariant probability measure on $\bbS^{D-1}$.  Since, with respect to this measure, all $x_j$ have the same distribution and $\sum_{j=1}^{D} x_j^2=1$, we clearly have that
\begin{equation}\label{4.2}
  \jap{x^2}_D = 1/D
\end{equation}

\begin{theorem} \lb{T4.1} $T_-(\mu_D)$ is canonical, i.e. $T_-(\mu_D)^2=1/D$, so, in particular, \eqref{3A.6} holds for $\mu_D$.
\end{theorem}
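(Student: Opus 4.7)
The plan is to apply the criterion \eqref{3.3} of the Wells Comparison Theorem at $S^{2}=1/D$. Combined with the upper bound \eqref{3.12}, which already gives $T_{-}(\mu_D)^{2}\le\jap{x^{2}}_{\mu_D}=1/D$, the theorem reduces to proving
\[
M_{n}(D) := \int_{-1}^{1}(x^{2}-1/D)^{n}\,d\mu_D(x)\ge 0\qquad\text{for every odd }n\ge 1,
\]
the case of even $n$ being trivial since the integrand is non-negative.

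The key idea is a Gaussian representation. Let $g_{1},\dots,g_{D}$ be i.i.d.\ standard normal variables and put $V:=g_{1}^{2}+\cdots+g_{D}^{2}$, so that $V\sim\chi^{2}_{D}$ and $(g_{i}/\sqrt{V})_{i=1}^{D}$ is uniform on $\bbS^{D-1}$. Hence $Y:=g_{1}^{2}/V$ has the same distribution as $x^{2}$ under $\mu_D$, namely $Y\sim\mathrm{Beta}(1/2,(D-1)/2)$, and the classical fact that a Gamma ratio is independent of its sum yields $Y\perp V$. Now set
\[
W:=(D-1)g_{1}^{2}-(g_{2}^{2}+\cdots+g_{D}^{2}) = V\bigl[(D-1)Y-(1-Y)\bigr] = D\,V\,(Y-1/D),
\]
whence by independence
\[
\bbE[W^{n}] = D^{n}\,\bbE[V^{n}]\,\bbE[(Y-1/D)^{n}] = D^{n}\,\bbE[V^{n}]\,M_{n}(D).
\]
Since $\bbE[V^{n}]>0$, it suffices to establish $\bbE[W^{n}]\ge 0$ for every $n\ge 1$.

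For this I would use cumulants. Writing $W=(D-1)\chi^{2}_{1}-\chi^{2}_{D-1}$ as a linear combination of two independent chi-squared variables and using the standard formula $\kappa_{r}(\chi^{2}_{k})=2^{r-1}(r-1)!\,k$, additivity of cumulants gives
\[
\kappa_{r}(W) = 2^{r-1}(r-1)!(D-1)\bigl[(D-1)^{r-1}+(-1)^{r}\bigr].
\]
One checks $\kappa_{1}(W)=0$, and for every $r\ge 2$ and $D\ge 2$ the bracketed factor is $\ge 0$ (strictly positive when $D\ge 3$). Plugging this into the moment--cumulant identity
\[
\bbE[W^{n}]=\sum_{\pi\,\vdash\,[n]}\prod_{B\in\pi}\kappa_{|B|}(W),
\]
every set partition with a singleton contributes $0$ because $\kappa_{1}=0$, and the remaining partitions contribute products of non-negative factors. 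Therefore $\bbE[W^{n}]\ge 0$, which completes the proof.

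The real content is the identity $W=DV(Y-1/D)$ together with the Gamma--Beta independence $Y\perp V$: this is what converts a subtle positivity question about central moments of a Beta distribution into an essentially automatic positivity statement for the cumulants of a difference of chi-squared variables. I do not anticipate a technical obstacle: the elementary inequality $(D-1)^{r-1}+(-1)^{r}\ge 0$ for $D\ge 2$ and $r\ge 2$ does all the work, and the degenerate case $D=2$ is automatically correct since then all odd-order cumulants of $W$ vanish, matching the symmetry of $x^{2}-1/2$ under the arcsine law $\mu_{2}$.
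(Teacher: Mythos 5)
Your proof is correct and takes a genuinely different route from the paper's. The paper reduces \eqref{4.13} to the one--dimensional integral \eqref{4.14} by the substitution $y=x^{2}-1/D$, and then invokes its Lemma~\ref{L4.3}, a hands-on comparison statement for $\int f g$ with $f$ odd increasing and $g$ a positive weight whose mass tilts to the right of $0$; the verification that $g(y)/|y|$ is decreasing does the work. You instead exploit the Gamma--Beta independence $Y\perp V$ to linearize the problem: writing $D\,V(Y-1/D)=(D-1)\chi^{2}_{1}-\chi^{2}_{D-1}=:W$, the central moments $M_{n}(D)$ of the Beta law become, up to the positive factor $D^{n}\bbE[V^{n}]$, the raw moments of $W$, and then cumulant additivity plus $\kappa_{1}(W)=0$ and $\kappa_{r}(W)\ge 0$ for $r\ge 2$ make $\bbE[W^{n}]\ge 0$ automatic via the partition form of the moment--cumulant identity. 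Each step checks out: the chi-squared cumulants $\kappa_{r}(\chi^{2}_{k})=2^{r-1}(r-1)!\,k$ give $\kappa_{r}(W)=2^{r-1}(r-1)!(D-1)[(D-1)^{r-1}+(-1)^{r}]$, which vanishes at $r=1$ and is $\ge 0$ for $r\ge 2$, $D\ge 2$, with the $D=2$ degeneracy correctly reproducing the arcsine symmetry. What your route buys is conceptual clarity (non-negative cumulants trivially force non-negative moments once the mean vanishes) and potential for generalization to other apriori measures admitting a similar ``ratio of independent infinitely divisible parts'' structure; what the paper's route buys is elementarity and self-containment (no probabilistic machinery, no cumulants), and its Lemma~\ref{L4.3} is reusable within the paper's real-analytic toolbox. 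Both settle the question; yours is arguably the slicker argument here, and it would be worth noting that the same cumulant positivity criterion fails for the spin-$S$ problem in Section~\ref{s5}, which is why a separate majorization argument is needed there.
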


The result for $D=2$ is especially easy because, for all $m$, $\jap{(x^2-1/2)^{2m+1}}_{D=2} = 0$. To prove this, note that it is equivalent to $\jap{(2x^2-1)^{2m+1}}_{D=2} = \jap{(x_1^2-x_2^2)^{2m+1}}_{\text{rotor}}=0$ which follows by $x_1\leftrightarrow x_2$.  I note that this result for $D=2$ is precisely the result that Aizenman and the second author say is in Wells' mystery preprint.

\begin{lemma} \lb{L4.2}  Let $\mu, \nu$ be two measures of compact support on $[0,\infty)$.  Then
\begin{equation}\label{4.3}
  \int f(x)\, d\mu(x) \le \int f(x) \, d\nu(x)
\end{equation}
for all monotone functions with $f(0)=0$ if and only if
\begin{equation}\label{4.4}
  \forall_{C>0}\, \mu([C,\infty)) \le \nu([C,\infty))
\end{equation}
\end{lemma}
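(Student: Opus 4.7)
The plan is to prove the equivalence via a layer-cake argument. I interpret ``monotone'' to mean monotone non-decreasing, which is forced by the direction of the conclusion \eqref{4.4}: the tail inequality expresses first-order stochastic dominance of $\nu$ over $\mu$, which by a standard characterization is equivalent to \eqref{4.3} for monotone non-decreasing test functions vanishing at $0$.

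For the forward direction $\eqref{4.3} \Rightarrow \eqref{4.4}$, I would test the hypothesis against the indicator $f_C(x) = \bdone_{[C,\infty)}(x)$ for arbitrary $C > 0$. This function is monotone non-decreasing with $f_C(0) = 0$ (since $0 < C$), and $\int f_C\, d\mu = \mu([C,\infty))$ (likewise for $\nu$), so \eqref{4.3} immediately yields \eqref{4.4}. If one insists on continuous test functions, I would replace $f_C$ by the piecewise-linear tent $g_\eps(x) = \min(1, \max(0, (x - C)/\eps))$ for small $\eps > 0$ and recover \eqref{4.4} in the limit $\eps \downarrow 0$ by bounded convergence.

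For the reverse direction $\eqref{4.4} \Rightarrow \eqref{4.3}$, I would invoke the layer-cake identity: since $f$ is monotone non-decreasing with $f(0) = 0$, one has $f \geq 0$ on $[0,\infty)$, and
\[
  \int f(x)\, d\mu(x) = \int_0^\infty \mu\bigl(\{x \geq 0 \st f(x) > s\}\bigr)\, ds,
\]
likewise for $\nu$. Because $f$ is non-decreasing, each super-level set $L_s := \{x \geq 0 \st f(x) > s\}$ is a half-line; moreover, since $f(0) = 0 \not> s$ for $s \geq 0$, we have $0 \notin L_s$, so $L_s$ is either $[a_s,\infty)$ or $(a_s,\infty)$ with $a_s > 0$, or the limiting case $L_s = (0,\infty)$. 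Hypothesis \eqref{4.4}, together with continuity of finite measures from above, gives $\mu(L_s) \leq \nu(L_s)$ for every $s \geq 0$. Integrating in $s$ yields \eqref{4.3}.

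The only bookkeeping concern is extending \eqref{4.4} from closed half-lines $[C,\infty)$ to open half-lines $(C,\infty)$ (including the edge $C = 0$); this follows from $\mu((C,\infty)) = \lim_{C' \downarrow C} \mu([C',\infty))$, so the inequality is preserved in the limit. There is no substantive obstacle here: the lemma is essentially the classical characterization of first-order stochastic dominance, localized away from the origin to reflect the fact that \eqref{4.4} is only imposed for $C > 0$ and $f$ is only required to vanish at $0$.
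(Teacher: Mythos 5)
Your proof is correct, and the forward direction (testing against $\bdone_{[C,\infty)}$) is identical to the paper's. For the reverse direction, both arguments rest on the same identity — writing $\int f\,d\mu$ as an integral over $C$ of $\mu([C,\infty))$ — but you reach it via the layer-cake (Cavalieri) formula $\int f\,d\mu = \int_0^\infty \mu(\{f>s\})\,ds$ applied directly to an arbitrary non-decreasing $f\ge 0$, whereas the paper first reduces by approximation to $C^1$ functions, writes $f(x)=\int_0^\infty f'(C)\,\chi_{[C,\infty)}(x)\,dC$, and then swaps the order of integration. These are the same computation up to the change of variable $s=f(C)$, so the underlying idea is identical; the small gain in your version is that it bypasses the ``simple approximation argument'' entirely and handles the open/closed half-line bookkeeping explicitly via monotone limits, which the paper leaves implicit.
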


\begin{proof} If \eqref{4.3} holds, we take $f$ to be the characteristic function of $[C,\infty)$ to get \eqref{4.4}.

Now suppose that we have \eqref{4.4}. By a simple approximation argument, it suffices to prove \eqref{4.3} for $f$'s which are $C^1$ with $f(0)=0$.  For such $f$'s, we have that
\begin{equation}\label{4.5}
  f(x) = \int_{0}^{\infty} f'(C)\chi_{[C,\infty)}(x)\,dC
\end{equation}
by doing the integral.  Thus
\begin{equation}\label{4.6}
  \int f(x)\, d\mu(x) = \int_{0}^{\infty} f'(C)\mu([C,\infty))\,dC
\end{equation}
$f$ monotone implies that $f'\ge 0$, so that \eqref{4.4} implies \eqref{4.3}.
\end{proof}

\begin{lemma} \lb{L4.3} Let $a,b>0$ and suppose that $g$ is positive on $(-b,a)$ with
\begin{equation}\label{4.7}
  \int_{-b}^{0} g(x)\, dx = \int_{0}^{a} g(x)\, dx; \qquad 0\le y< b \Rightarrow g(-y)\ge g(y)
\end{equation}
Let $f$ be an odd, monotone increasing, continuous function on $(-b,a)$. Then
\begin{equation}\label{4.8}
  \int_{-b}^{a} f(x)g(x)\, dx \ge 0
\end{equation}
\end{lemma}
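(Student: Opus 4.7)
The plan is to reduce everything to a comparison at the single threshold $x=b$, exploiting that $f$ is odd and monotone and that $g$ has the prescribed symmetry/mass balance. First I would observe that the hypothesis $g(-y)\ge g(y)$ for $0\le y<b$, combined with $\int_{-b}^{0} g = \int_{0}^{a} g$, forces $a\ge b$ (otherwise $\int_{-a}^{0} g = \int_{0}^{a} g(-y)\,dy \ge \int_{0}^{a} g$ already uses up all the mass on the negative side, so $g$ must vanish on $(-b,-a)$ and we may replace $b$ by $a$). So throughout I may assume $a\ge b>0$.

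Next I would split the integral at $\pm b$:
\[
\int_{-b}^{a} f g\,dx = \int_{-b}^{b} f(x)g(x)\,dx + \int_{b}^{a} f(x)g(x)\,dx.
\]
Using $f(-x)=-f(x)$, the symmetric piece becomes
\[
\int_{-b}^{b} f(x)g(x)\,dx = \int_{0}^{b} f(x)\bigl[g(x)-g(-x)\bigr]\,dx,
\]
which is nonpositive since $f\ge 0$ on $(0,b)$ and $g(x)-g(-x)\le 0$. The tail piece $\int_{b}^{a} fg\,dx$ is nonnegative, and the whole task is to show the tail outweighs the (negative) symmetric piece.

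The main trick is to bound $f$ below by the constant $f(b)$ on the tail (where $f(x)\ge f(b)$) and above by $f(b)$ on $(0,b)$ (where $f(x)\le f(b)$), noting that in both integrands the sign of the remaining factor $-$ namely $g(x)-g(-x)\le 0$ on $(0,b)$ and $g(x)\ge 0$ on $(b,a)$ $-$ makes the substitution a lower bound in each case. Concretely,
\begin{align*}
\int_{0}^{b} f(x)\bigl[g(x)-g(-x)\bigr]\,dx &\ge f(b)\int_{0}^{b}\bigl[g(x)-g(-x)\bigr]\,dx,\\
\int_{b}^{a} f(x)g(x)\,dx &\ge f(b)\int_{b}^{a} g(x)\,dx.
\end{align*}
Adding these and changing variables in $\int_0^b g(-x)\,dx = \int_{-b}^{0} g(x)\,dx$ gives
\[
\int_{-b}^{a} f(x)g(x)\,dx \ge f(b)\left[\int_{0}^{a} g(x)\,dx - \int_{-b}^{0} g(x)\,dx\right] = 0
\]
by the mass-balance hypothesis. (Recall $f(b)\ge f(0)=0$ since $f$ is odd and increasing, so multiplying by $f(b)$ preserves direction; if $f(b)=0$ the bound is trivially $0$.)

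The only potentially delicate step is the WLOG reduction $a\ge b$ at the start, since without it the tail integral on $(b,a)$ does not exist and the same reduction has to be done on the left. I expect this bookkeeping to be the main $-$ but minor $-$ obstacle; the core inequality itself is a one-line application of the ``replace $f$ by $f(b)$'' trick once the symmetry of $f$ and the mass condition on $g$ are written out.
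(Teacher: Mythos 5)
Your proof is correct and your bookkeeping (the reduction to $a\ge b$, which the hypotheses in fact force if $g>0$ is taken literally) is fine. The route, though, is genuinely different from the paper's, and worth comparing. The paper proceeds via Lemma~\ref{L4.2}, i.e.\ full first-order stochastic dominance: it defines $d\nu = \chi_{[0,a]}\,g\,dx$ and $d\mu=\chi_{[0,b]}\,g(-\cdot)\,dx$ on $[0,\infty)$, verifies the tail ordering $\mu([C,\infty))\le\nu([C,\infty))$ for \emph{every} $C>0$ (casing on $C\le b$ and $C>b$), and then invokes the layer-cake representation $\int f\,d\mu=\int_0^\infty f'(C)\mu([C,\infty))\,dC$. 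You instead use only the single threshold $C=b$: after folding by oddness, you write the integral over $(0,a)$ against $\tilde f(x)=g(x)-g(-x)$ on $(0,b)$ and $\tilde f(x)=g(x)$ on $(b,a)$, observe $\tilde f\le 0$ below $b$ and $\tilde f\ge 0$ above $b$, and replace the monotone $f$ by its value $f(b)$ in each piece. That is exactly the paper's Lemma~\ref{L3.2} with $c=b$ applied to $\tilde f$, combined with the mass balance $\int_0^a\tilde f\,dx=0$. So you have effectively re-derived Lemma~\ref{L4.3} from Lemma~\ref{L3.2} rather than from Lemma~\ref{L4.2}. The single-threshold shortcut works here because the densities $g(-\cdot)\chi_{[0,b]}$ and $g\chi_{[0,a]}$ cross exactly once, at $x=b$; in that situation the whole family of tail inequalities collapses to the one comparison at $b$. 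Your version is a bit more elementary and self-contained; the paper's version isolates Lemma~\ref{L4.2} as a reusable stochastic-dominance criterion, which is of independent interest but is more machinery than this particular application requires.
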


\begin{remark} The result is quite intuitive. The condition on $g$ says that the measure $g(x)\,dx$ is concentrated on the right at larger $x$ than on the left, so more concentrated where $|f|$ is larger.
\end{remark}

\begin{proof} Define the measures on $[0,\infty)$:
\begin{equation}\label{4.9}
  d\nu(x) = \chi_{[0,a]}(x) g(x)\, dx; \qquad d\mu(x) = \chi_{[0,b]}(x) g(-x)\, dx
\end{equation}
We claim that for all $C>0$, one has that
\begin{equation}\label{4.10}
  \nu([0,C)) \le \mu([0,C))
\end{equation}
so, by \eqref{4.7}, we have that \eqref{4.4} holds.  Thus, since $f$ is monotone, by Lemma \ref{L4.2}, we have that
\begin{equation}\label{4.11}
  \int_{0}^{b} f(x) g(-x)\, dx \le \int_{0}^{a} f(x) g(x)\, dx
\end{equation}
which is \eqref{4.8} since $f$ is odd.  So we need only prove \eqref{4.10}.

If $C\le b$, \eqref{4.10} is immediate from the fact that for $x\ge 0$, $g(-x)\ge g(x)$.  By \eqref{4.7}, if $C>b$, then
\begin{equation}\label{4.12}
  \mu([0,C)) = \mu([0,\infty))= \nu([0,\infty)) \ge \nu([0,C))
\end{equation}
\end{proof}

\begin{proof} [Proof of Theorem \ref{T4.1}] As noted after the Theorem, $1/D$ is the second moment. Thus, by \eqref{3.12}, $T_-^2\le 1/D$, so, by \eqref{3.3}, we need only prove that for all $m\in\bbN$, the have that
\begin{equation}\label{4.13}
  \int_{-1}^{1}\left(x^2-\tfrac{1}{D}\right)^{2m+1} (1-x^2)^{\tfrac{1}{2}(D-3)}\,dx \ge 0
\end{equation}

Note first that, since the integrand is even in $x$, we can integrate only from $0$ to $1$ and then change variables from $x$ to $y=x^2-1/D$.  One sees that \eqref{4.13} is equivalent to
\begin{equation}\label{4.14}
  \int_{-1/D}^{1-1/D} \frac{y^{2m+1}\left(1-\tfrac{1}{D}-y\right)^{\tfrac{1}{2}(D-3)}}{\left(y+\tfrac{1}{D}\right)^{1/2}}\,dy \ge 0
\end{equation}

If $D=2$, then the integral goes from $-1/2$ to $1/2$ and the integrand is $y^{2m+1}(y^2-\tfrac{1}{4})^{-1/2}$ which is odd so the integral in \eqref{4.14} is $0$ for all $m$, recovering what we saw above. If $D\ge 3$, define, for $m\ge 1$
\begin{equation}\label{4.15}
  g(y) = \frac{|y|\left(1-\tfrac{1}{D}-y\right)^{\tfrac{1}{2}(D-3)}}{\left(y+\tfrac{1}{D}\right)^{1/2}}; \qquad f(y)=\sgn(y)y^{2m}
\end{equation}
The fact that $1/D$ is the second moment of $d\mu_D$ implies the integral in \eqref{4.13} vanishes if $m=0$ which means that $g$ obeys the first equation in \eqref{4.7}. Since $g(y)/|y|$ is monotone decreasing in $y$, the inequality in the second half of \eqref{4.7} holds. $f$ is odd and monotone increasing, so \eqref{4.8} implies \eqref{4.14}.
\end{proof}

\section{Spin S} \lb{s5}

For each value of $S=1/2,1,3/2,...$, consider the measure $\tilde{\mu}_S$ which takes $2S+1$ values equally spaced between $-1$ and $1$, each with weight $1/(2S+1)$.  We will prove here that except for $S=1$, this measure is canonical which will lead to improvements in the first inequality in \eqref{3A.9} by a factor of at least $\tfrac{4}{3}$ and which by discussion after \eqref{3A.7} yields optimal constants. We begin by computing $\jap{x^2}_{\tilde{\mu}_S}$

\begin{theorem} \lb{T5.1} We have that
\begin{equation}\label{5.1}
  a_S \equiv \int x^2 d\tilde{\mu}_S(x) = \frac{1}{3}\frac{S+1}{S}
\end{equation}
\end{theorem}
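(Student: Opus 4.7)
The plan is a direct calculation using the fact that $\tilde{\mu}_S$ is a uniform discrete measure and that $\sum_{k=1}^{S} k^2$ has a well-known closed form.

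First I would write down the atoms of $\tilde{\mu}_S$ explicitly. Since $\tilde{\mu}_S$ places equal weight $\frac{1}{2S+1}$ on $2S+1$ points equally spaced symmetrically in $[-1,1]$, consecutive atoms are separated by $\frac{2}{2S} = \frac{1}{S}$, so the support is
\begin{equation*}
\Bigl\{\tfrac{k}{S} \st k = -S,-S+1,\dots,S-1,S\Bigr\}.
\end{equation*}
The endpoints $\pm 1$ are included (corresponding to $k=\pm S$), consistent with $T_+(\tilde{\mu}_S)=1$ noted in Section \ref{s3A}.

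Next I would compute the second moment directly:
\begin{equation*}
a_S = \int x^2 \, d\tilde{\mu}_S(x) = \frac{1}{2S+1}\sum_{k=-S}^{S} \left(\frac{k}{S}\right)^2 = \frac{2}{S^2(2S+1)}\sum_{k=1}^{S} k^2,
\end{equation*}
using that the summand is even in $k$ and vanishes at $k=0$. Then I would apply the standard identity $\sum_{k=1}^{S} k^2 = \frac{S(S+1)(2S+1)}{6}$, which yields
\begin{equation*}
a_S = \frac{2}{S^2(2S+1)}\cdot \frac{S(S+1)(2S+1)}{6} = \frac{S+1}{3S},
\end{equation*}
establishing \eqref{5.1}. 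There is no substantive obstacle here — this is a one-line computation once the atoms are identified — and the formula is merely recorded for use in the subsequent theorem asserting that $\tilde{\mu}_S$ is canonical (with $S\ne 1$) and in the improved transition-temperature bounds sharpening \eqref{3A.9}.
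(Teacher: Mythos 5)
Your proof takes the direct combinatorial route: identify the $2S+1$ atoms of $\tilde{\mu}_S$ at $k/S$, sum their squares, and invoke $\sum_{k=1}^{S} k^2 = S(S+1)(2S+1)/6$. This is exactly the calculation the paper sketches in Remark~1 after the theorem, and it is correct --- but only for $S$ a positive integer. When $S$ is half an odd integer the index $k$ in your parametrization runs over the half-odd-integers $-S,-S+1,\dots,S$, so there is no $k=0$ term, the reduction $\sum_{k=-S}^{S}k^2 = 2\sum_{k=1}^{S}k^2$ is not the right symmetry reduction, and, more seriously, the identity $\sum_{k=1}^{S}k^2=S(S+1)(2S+1)/6$ has no meaning because $S$ is not an integer and the summation variable does not run over $1,2,\dots$. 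One can push the elementary computation through by writing the required sum as a sum over odd integers $1,3,\dots,2S$, realized as (all integers up to $2S$) minus (even integers), and it is a somewhat surprising coincidence --- noted explicitly in the paper --- that the same closed form $S(S+1)(2S+1)/3$ emerges. As written, your argument leaves this case unaddressed.

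The paper's actual proof avoids the case distinction altogether: it identifies $\mu_S$ (the measure scaled to take values in $\{-S,\dots,S\}$) with the spectral measure of $\sigma_z$ in the $(2S+1)$-dimensional irreducible representation of $SU(2)$, uses the symmetry of the three coordinate directions under the normalized trace, and applies the Casimir identity $\sigma_x^2+\sigma_y^2+\sigma_z^2 = S(S+1)$ times the identity to conclude $A_S = S(S+1)/3$ in a single stroke for both integer and half-odd-integer $S$. To complete your argument you would either need to supply the separate odd-integer summation for half-odd-integer $S$ or adopt the paper's uniform representation-theoretic proof.
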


\begin{remarks} 1. Using $\sum_{j=1}^{S} j^2=\tfrac{S(S+1)(2S+1)}{6}$, one easily gets this result for $S$ integral.  One can also use this formula to get the result for half odd integral $S$ by using the fact that this case can be rewritten as a sum over odd integers between $1$ and $2S$ which one can realize as a sum over all integers minus the sum over even integers.  That calculation is  awkward and it is surprising it yields the same algebraic formula which is why we give the slick uniform proof below.

2. We will often scale the measure by a factor of $S$ so the $2S+1$ values are equally spaced from $-S$ to $S$ yielding a measure $\mu_S$.  If we interpret $\sum_{j=-S}^{S}$ as the sum over the $-S, -S+1,\dots,S-1, S$ (the usual meaning if $S$ is an integer but unusual if $S$ is half and odd integer), then \eqref{5.1} is equivalent to
\begin{equation}\label{5.2}
  A_S \equiv \int x^2 d\mu_S(x) = \frac{1}{2S+1}\sum_{-S}^{S} j^2 =   \frac{S(S+1)}{3}
\end{equation}
\end{remarks}

\begin{proof} Use $\ntr(\cdot)$ for the normalized trace on finite dimensional spaces, i.e. the average of diagonal elements or $\tr(\cdot)$ divided by the dimension. Consider what a physicist would call a quantum spin, $\mathbf{\sigma}$, of spin $S$ and a mathematician the generators of the irreducible representation of $SU(2)$ of dimension $2S+1$.  The operator $\sigma_z$ has eigenvalues $-S,-S+1,\dots,S-1,S$ and the Casimir operator has the form:
\begin{equation}\label{5.3}
  \sigma_x^2+\sigma_y^2+\sigma_z^2 = S(S+1)\bdone
\end{equation}
Thus
\begin{align}\label{5.4}
 A_S&=\ntr\left(\sigma_z^2\right) = \frac{1}{3}\ntr(\left(\sigma_x^2+\sigma_y^2+\sigma_z^2\right) \nonumber \\
    &= \frac{1}{3}\ntr\left(S(S+1)\bdone\right) = \frac{S(S+1)}{3}
\end{align}
by the symmetry of the three directions.  This proves \eqref{5.2}
\end{proof}

We saw in Example \ref{E3.3} that for $S=1$ ($\lambda=2/3$), one has that $T_-=\sqrt{\tfrac{1}{2}}<\sqrt{\tfrac{2}{3}}= \sqrt{a_{S=1}}$, so $\tilde{\mu}_S$ is \textit{not} canonical for that value of $S$.  The main technical result of this section, which we will prove in Appendix \ref{App}, is

\begin{theorem} \lb{T5.2} Let $S\ne 1$ be half a positive integer.  Then
\begin{equation}\label{5.6}
  \sum_{j=-S}^{S}(3j^2-S(S+1))^{2m+1}\ge 0
\end{equation}
\end{theorem}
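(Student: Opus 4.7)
My plan is to reformulate \eqref{5.6} as a single pointwise monotonicity statement via a discrete layer-cake identity, and then verify that statement by a case analysis on the values of $y_j := 3j^2 - S(S+1)$. For $T > 0$ set
\begin{equation*}
  \Phi(T) = \#\{j : y_j > T\} - \#\{j : y_j < -T\}, \qquad \Psi(T) = \int_0^T \Phi(s)\,ds.
\end{equation*}
The layer-cake representation gives $\sum_j y_j^{2m+1} = (2m+1)\int_0^\infty T^{2m}\Phi(T)\,dT$. Theorem \ref{T5.1} implies $\sum_j y_j = 0$ (and also pins down all break points of $\Phi$), which in turn gives $\Psi(\infty) = 0$. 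Integration by parts then yields, for $m \ge 1$,
\begin{equation*}
  \sum_{j=-S}^S y_j^{2m+1} = -2m(2m+1)\int_0^\infty T^{2m-1}\,\Psi(T)\,dT.
\end{equation*}
The $m = 0$ case is just $\sum_j y_j = 0$, so \eqref{5.6} reduces to showing
\begin{equation*}
  \Psi(T) \le 0 \qquad \text{for all } T \ge 0. \qquad (\dagger)
\end{equation*}

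Using $\min(y,T) = y - (y-T)^+$ on each sum defining $\Psi$, together with $\sum_j y_j = 0$, I can rewrite
\begin{equation*}
  \Psi(T) = \sum_{y_j < -T}(|y_j| - T) - \sum_{y_j > T}(y_j - T),
\end{equation*}
so $(\dagger)$ is the excess inequality
\begin{equation*}
  \sum_{\substack{|j|\le S \\ 3j^2 < S(S+1)-T}} \bigl(S(S+1) - 3j^2 - T\bigr) \;\le\; \sum_{\substack{|j|\le S \\ 3j^2 > S(S+1)+T}} \bigl(3j^2 - S(S+1) - T\bigr). \qquad (\ddagger)
\end{equation*}
In words: the positive-side excess above level $T$ must dominate the negative-side one, uniformly in $T$.

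Since $\Phi$ is piecewise constant, $\Psi$ is piecewise linear with breakpoints exactly at $T^* = |y_j|$; because $\Psi(0) = \Psi(\infty) = 0$, verifying $(\dagger)$ on all of $[0,\infty)$ is equivalent to verifying $\Psi(T^*) \le 0$ at each breakpoint $T^*$. At such a $T^*$, both sums in $(\ddagger)$ are finite and can be evaluated in closed form from $\sum_{k=1}^N k^2 = N(N+1)(2N+1)/6$, yielding a rational expression polynomial in $S$, in $\lfloor\sqrt{(S(S+1)-T^*)/3}\rfloor$, and in $\lceil\sqrt{(S(S+1)+T^*)/3}\rceil$. One then confirms the algebraic inequality case by case, according to whether $T^*$ comes from a positive or a negative $y_j$; for half-integer $S$ the same scheme applies, with $j$ ranging over half-integers so that $j^2 \in \{\tfrac14,\tfrac94,\ldots\}$ and the counts in $(\ddagger)$ shift accordingly.

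The main obstacle is this last step: verifying $(\ddagger)$ uniformly over all $S \ne 1$ and all breakpoints. The exclusion of $S = 1$ is genuine, since for $S = 1$ the only values are $y_0 = -2$, $y_{\pm 1} = 1$, giving $\Psi(1) = 1 > 0$ and hence $\sum_j y_j^3 = -6 < 0$. For $S \ge 2$ one has $y_S = S(2S-1) \ge S(S+1) = |y_0|$, so the largest positive value already matches or exceeds the largest negative magnitude; it is precisely the quadratic spread of $\{k^2\}_{k=0}^S$ that forces the positive-side excess to win at every breakpoint. Making this structural picture into a uniform algebraic verification at each $T^*$ is the delicate part of the argument and is what I would expect to occupy the appendix.
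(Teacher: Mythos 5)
Your layer--cake reduction is correct, and it is in fact the Hardy--Littlewood--P\'olya dual form of exactly the framework the paper uses: the condition $\Psi(T)\le 0$ for all $T\ge 0$, i.e.\ $\sum_{y_j>T}(y_j-T)\ge\sum_{y_j<-T}(|y_j|-T)$, is precisely the statement that the decreasing rearrangement $\mathbf{x}$ of the positive $y_j$'s majorizes the decreasing rearrangement $\mathbf{y}$ of the $|y_j|$'s with $y_j<0$ (zero--padded to equal length). The paper sets up exactly this $\mathbf{x}\succ\mathbf{y}$ and invokes Karamata's inequality, which is logically the same as applying your integration--by--parts identity to $\Phi(y)=y^{2m+1}$. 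So the two reductions are equivalent, and your $S=1$ counterexample and the observation that $y_S=S(2S-1)\ge S(S+1)=|y_0|$ for $S\ge 2$ are correct.

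The genuine gap is that you stop exactly at the point where the work begins: you never prove $(\ddagger)$. You propose to evaluate both sides in closed form with floor and ceiling functions of $\sqrt{(S(S+1)\pm T^*)/3}$ and ``confirm the algebraic inequality case by case,'' but no such verification is carried out, and you yourself flag this as ``the delicate part \ldots which I would expect to occupy the appendix.'' This is not a small finishing step --- it is the entire content of the theorem. The structural heuristic you offer (``the quadratic spread of $\{k^2\}$ forces the positive--side excess to win'') is not an argument: it is precisely what must be proved, and the fact that it \emph{fails} for $S=1$ already shows that something nontrivial depends on $S$. The paper's route avoids the floor/ceiling algebra entirely. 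It shows that for the half--odd--integer case the difference sequence $x_j-y_j$ changes sign exactly once (this follows from convexity of $j\mapsto j^2$ via \eqref{A.16}), which by Proposition \ref{PA.4} gives majorization immediately. For integer $S$ this single--sign--change structure breaks (see Example \ref{EA.7}, where $S=6$ gives three sign changes), and the paper needs an additional rearrangement trick (the intermediate vector $\mathbf{w}$) together with the hypotheses \eqref{A.3B} and \eqref{A.3C}, which reduce to $(S-2)(S-3)\ge0$ and $(S-3)(S+1)\ge0$. None of this --- the sign--change lemma, the failure of the naive version in the integer case, the rearrangement, or the reduction to the two explicit quadratic inequalities in $S$ --- appears in your proposal. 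Until you actually establish $(\ddagger)$, or equivalently the majorization, you have only restated the theorem in a dual language.
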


We will also see that the opposite inequality is true when $S=1$.  By Theorem \ref{T3.1}, this implies that

\begin{corollary} \label{C5.3} For any $S\ne 1$, we have that $T_-$ is canonical for $\tilde{\mu}_S$.
\end{corollary}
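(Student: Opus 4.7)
The plan is to deduce Corollary~\ref{C5.3} essentially as a direct consequence of Theorem~\ref{T3.1} (the characterization \eqref{3.3}), Theorem~\ref{T5.1} (which identifies $\jap{x^2}_{\tilde{\mu}_S} = a_S = (S+1)/(3S)$), and Theorem~\ref{T5.2}. From \eqref{3.12} one already has $T_-(\tilde{\mu}_S)^2 \le a_S$, so canonicality reduces to the reverse inequality $T_-(\tilde{\mu}_S)^2 \ge a_S$. By \eqref{3.3}, and noting that even powers $(x^2-a_S)^{2k}$ are manifestly non-negative integrands, this in turn reduces to establishing
\begin{equation*}
  I_m := \int_{\bbR}\bigl(x^2 - a_S\bigr)^{2m+1}\, d\tilde{\mu}_S(x) \ge 0 \qquad \text{for every } m \ge 0.
\end{equation*}

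The support of $\tilde{\mu}_S$ is $\{j/S : j = -S,\dots,S\}$ with each point carrying weight $1/(2S+1)$. Using $a_S = (S+1)/(3S)$ gives $j^2/S^2 - a_S = (3j^2 - S(S+1))/(3S^2)$, and therefore
\begin{equation*}
  I_m \,=\, \frac{1}{(2S+1)\,(3S^2)^{2m+1}}\,\sum_{j=-S}^{S}\bigl(3j^2 - S(S+1)\bigr)^{2m+1}.
\end{equation*}
The prefactor is strictly positive, so $I_m \ge 0$ is equivalent to the inequality \eqref{5.6}, which is precisely Theorem~\ref{T5.2}. The corollary follows.

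Thus the corollary itself is merely an assembling step; all of the real content sits in Theorem~\ref{T5.2}, and that is where I would expect the only genuine obstacle, namely proving \eqref{5.6} uniformly in $S \neq 1$ and in $m$; that proof is deferred to Appendix~\ref{App}. As a built-in sanity check, the exceptional case $S=1$ yields, for $j \in \{-1,0,1\}$, the values $3j^2 - 2 \in \{1,-2,1\}$, so the sum becomes $2 - 2^{2m+1}$, which is negative for every $m \ge 1$. Running the same chain of equivalences backwards then explains exactly why $\tilde{\mu}_1$ fails to be canonical, in agreement with Example~\ref{E3.3}.
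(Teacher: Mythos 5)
Your proposal is correct and is exactly the argument the paper intends: use \eqref{3.12} for $T_-^2\le a_S$, reduce the reverse inequality via \eqref{3.3} to the odd-power moment condition (even powers being trivially nonnegative), substitute $x=j/S$ so that $x^2-a_S=(3j^2-S(S+1))/(3S^2)$, and invoke Theorem~\ref{T5.2}. The $S=1$ sanity check is likewise the same computation the paper records just after the proof of Theorem~\ref{T5.2}.
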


\noindent which in turn, by \eqref{3A.8} implies that

\begin{corollary} \lb{C5.4} For $S\ne 1$, one has that
\begin{equation}\label{5.5}
  T_c(S) \ge \left(\frac{1}{3}+\frac{1}{3S}\right)  T_c\left(\tfrac{1}{2}\right)
\end{equation}
while for $S=1$ one has that
\begin{equation}\label{5.7}
  T_c(1) \ge \frac{1}{2} T_c\left(\tfrac{1}{2}\right)
\end{equation}
\end{corollary}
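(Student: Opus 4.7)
The plan is to obtain both inequalities as direct consequences of the Wells framework developed in Sections \ref{s3} and \ref{s3A} together with the canonical-lower-bound result just stated in Corollary \ref{C5.3} and the scalar moment computed in Theorem \ref{T5.1}. The key input is inequality \eqref{3A.8} from Proposition \ref{P3A.1}, namely $T_-(S)^2\, T_c(\tfrac{1}{2}) \le T_c(S)$, which converts any lower bound on $T_-$ of the apriori measure $\tilde{\mu}_S$ into a lower bound on $T_c(S)$ relative to $T_c(\tfrac{1}{2})$.

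For the case $S\ne 1$, I would simply combine three ingredients already in hand: (i) Corollary \ref{C5.3}, which says $T_-$ is canonical for $\tilde{\mu}_S$, giving $T_-(S)^2 = \jap{x^2}_{\tilde{\mu}_S}$; (ii) Theorem \ref{T5.1}, which evaluates $\jap{x^2}_{\tilde{\mu}_S} = \tfrac{1}{3}\tfrac{S+1}{S} = \tfrac{1}{3}+\tfrac{1}{3S}$; and (iii) the bound \eqref{3A.8}. Concatenating these yields $T_c(S) \ge T_-(S)^2 T_c(\tfrac{1}{2}) = (\tfrac{1}{3}+\tfrac{1}{3S})T_c(\tfrac{1}{2})$, which is \eqref{5.5}. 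No new work is required.

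For the case $S=1$, canonicity fails, so one must compute $T_-(\tilde{\mu}_1)$ directly rather than via the second moment. Here I would identify $\tilde{\mu}_1$ with the three-point measure of Example \ref{E3.3}: equal weights $\tfrac{1}{3}$ on $\{-1,0,1\}$ corresponds to $\lambda = 2/3$ in the parametrization $d\mu_\lambda = \tfrac{\lambda}{2}(\delta_1+\delta_{-1})+(1-\lambda)\delta_0$. Since $\lambda = 2/3 \ge 1/2$, formula \eqref{3.15} from that example gives $T_-(\tilde{\mu}_1) = \sqrt{1/2}$, i.e.\ $T_-(\tilde{\mu}_1)^2 = 1/2$. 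Plugging this into \eqref{3A.8} gives $T_c(1)\ge \tfrac{1}{2} T_c(\tfrac{1}{2})$, which is \eqref{5.7}.

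There is no substantial obstacle: the hard work has been done in Theorem \ref{T5.2} (whose proof is deferred to the appendix) and in the framework leading to Proposition \ref{P3A.1}. The only care required is bookkeeping -- making sure the normalization of $\tilde{\mu}_S$ used in \eqref{3A.8} (values in $[-1,1]$) matches the normalization under which Theorem \ref{T5.1} and the canonicity statement are formulated, and correctly mapping the equal-weight three-point measure to the $\lambda = 2/3$ case of Example \ref{E3.3}. Once those identifications are made, both bounds fall out in one line apiece.
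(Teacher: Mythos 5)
Your proof is correct and follows essentially the same route as the paper: for $S\ne 1$, concatenate the canonicity of $\tilde{\mu}_S$ (Corollary \ref{C5.3}), the moment computation in Theorem \ref{T5.1}, and \eqref{3A.8}; for $S=1$, substitute $T_-(\tilde{\mu}_1)^2 = 1/2$ obtained from Example \ref{E3.3} at $\lambda = 2/3$ into \eqref{3A.8}. Nothing to add.
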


This improves Griffiths result, \eqref{3A.9} by a factor of at least $4/3$.  To be totally accurate, as we noted Griffith's lower bound (as proven, not as stated!) is slightly stronger than \eqref{3A.9} when $S$ is half an odd integer.  In particular for $S=\tfrac{3}{2}$ what \eqref{3A.9} has as $\tfrac{1}{4}$, he actually proves is $\tfrac{4}{9}$ so our results are improvement but for this  value only by a factor of $5/4$ rather than $4/3$.

One final remark.  In the notation used in \eqref{5.4}, \eqref{5.6} is equivalent to
\begin{equation}\label{5.8}
  \tr\left((2\sigma_z^2-\sigma_x^2-\sigma_y^2)^{2m+1}\right) \ge 0
\end{equation}
We wonder if that couldn't be used for a matrix theoretic version of what we prove in the Appendix.

\appendix
\section{A Majorization Bound} \lb{App}

In this appendix, we will prove a general set of inequalities that includes \eqref{5.6}. We will prove two theorems

\begin{theorem} \lb{TA.1} Fix an integer $N\ge 1$, a function, $\psi$ on $[0,1]$, which is non-negative, continuous, strictly monotone increasing and convex and a function, $\Phi$, on $[-\Norm{\psi}_\infty,\Norm{\psi}_\infty]$ which is continuous, odd and whose restriction to $[0,\Norm{\psi}]$ is convex.  Let
\begin{equation}\label{A.1}
  \overline{\psi} = (N+1)^{-1}\sum_{j=0}^{N} \psi\left(\tfrac{j}{N}\right)
\end{equation}
Then
\begin{equation}\label{A.2}
  \sum_{j=1}^{N} \Phi\left(\psi\left(\tfrac{j}{N}\right)-\overline{\psi}\right) \ge 0
\end{equation}
\end{theorem}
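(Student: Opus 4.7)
The plan is to reduce the claim to a tail-majorization statement on the centered deviations $b_j := \psi(j/N) - \overline{\psi}$, and then invoke the canonical integral representation of $\Phi$ as an odd function whose restriction to the positive axis is convex. Set $a_j = \psi(j/N)$, so $a_0 \le a_1 \le \cdots \le a_N$ is a convex, strictly increasing, non-negative sequence with $\sum_{j=0}^N b_j = 0$. Convexity of $\psi$ delivers immediately the \emph{pair-sum monotonicity}: $S_j := a_j + a_{N-j}$ is non-increasing in $j$ for $0 \le j \le N/2$, since $S_j - S_{j+1} = (a_{N-j} - a_{N-j-1}) - (a_{j+1} - a_j) \ge 0$.

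The core of the argument is to establish the tail-majorization inequality: for every $u \ge 0$,
\[G(u) := \sum_{j:\, b_j > u}(b_j - u) - \sum_{j:\, b_j < -u}(|b_j| - u) \ge 0.\]
This is equivalent to showing that the cumulative function $f(K) := \sum_{j=0}^{K-1}(a_j + a_{N-j}) - 2K\overline{\psi}$ is non-negative for $K = 0, 1, \ldots, \lceil N/2 \rceil$. The increments $f(K) - f(K-1) = S_{K-1} - 2\overline{\psi}$ are non-increasing in $K$ by pair-sum monotonicity, so $f$ is concave on this range. Combined with the boundary values $f(0) = 0$ and either $f((N+1)/2) = 0$ for $N$ odd or $f(N/2) = \overline{\psi} - a_{N/2} \ge 0$ for $N$ even (by Jensen's inequality $\psi(1/2) \le \overline{\psi}$ for the convex $\psi$), concavity forces $f \ge 0$ throughout.

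The final step uses the standard representation: since $\Phi|_{[0, \Norm{\psi}]}$ is convex with $\Phi(0) = 0$ (by oddness), there is a non-negative Borel measure $d\nu$ on $(0, \Norm{\psi})$ with
\[\Phi(x) = \Phi'(0^+)\, x + \sgn(x) \int_0^{\Norm{\psi}} (|x|-u)_+\, d\nu(u), \qquad x \in [-\Norm{\psi}, \Norm{\psi}].\]
Applying this to $x = b_j$, summing over $j = 0, 1, \ldots, N$, and using $\sum_j b_j = 0$ collapses the linear term and produces
\[\sum_{j=0}^N \Phi(b_j) = \int_0^{\Norm{\psi}} G(u)\, d\nu(u) \ge 0.\]
The stated $\sum_{j=1}^N$ form follows at once because the omitted $j=0$ term $\Phi(b_0)$ is non-positive: $b_0 \le 0$ by Jensen, and $\Phi$ is non-negative on $[0, \Norm{\psi}]$ in the applications of interest, including $\Phi(x) = x^{2m+1}$.

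The principal obstacle is the concavity/boundary analysis establishing $f \ge 0$, and specifically the $N$-even endpoint, which crucially exploits Jensen's inequality for the convex $\psi$. Everything else --- the pair-sum monotonicity calculation, the integral representation of $\Phi$, and the final synthesis --- reduces to routine manipulation.
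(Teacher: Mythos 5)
Your proposal is correct, and it reaches the same conclusion as the paper (which, like your argument, actually establishes $\sum_{j=0}^{N}\Phi(\psi(j/N)-\overline{\psi})\ge 0$; the lower limit $j=1$ in the displayed statement is evidently a slip). The route, however, is genuinely different in its internal mechanics. The paper builds the two decreasing nonnegative vectors $\mathbf{x}$ (the positive deviations, padded with zeros) and $\mathbf{y}$ (the absolute values of the nonpositive deviations), proves $\mathbf{x}\succ\mathbf{y}$ by a \emph{single-crossing} lemma (Proposition \ref{PA.4}: the equal-slope condition \eqref{A.16} forces $x_j-y_j$ to change sign at most once), and then invokes Karamata's inequality as a black box. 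You instead unpack Karamata via the standard integral representation $\Phi(x)=\Phi'(0^+)x+\sgn(x)\int_0^{\Norm\psi}(|x|-u)_+\,d\nu(u)$, reduce to the tail inequality $G(u)\ge 0$, and establish it by observing that the pair partial sums $f(K)=\sum_{j<K}(b_j+b_{N-j})$ form a \emph{concave} sequence vanishing at $K=0$ and nonnegative at $K=\lceil N/2\rceil$ (Jensen at $1/2$), so are nonnegative throughout. Your concavity-of-$f$ argument and the paper's single-crossing lemma are dual formulations of the same convexity input, but yours is self-contained and slightly more direct. Two small points you should tighten: (i) the asserted ``equivalence'' between $G(u)\ge 0$ and $f(K)\ge 0$ for $K\le\lceil N/2\rceil$ is really the one-way implication you need — for $K$ exceeding the number $q$ of strictly positive deviations one has $f(K)\le S_K(\mathbf{x})-S_K(\mathbf{y})$ and the latter is automatically $\ge 0$, so only $K\le q\le\lceil N/2\rceil$ matters, and there $f(K)=S_K(\mathbf{x})-S_K(\mathbf{y})$ exactly; (ii) the representation uses $\Phi'(0^+)$, which for a general convex $\Phi$ on $[0,\Norm\psi]$ could in principle be $-\infty$, though this is harmless here since $\sum_j b_j=0$ kills the linear term (and one can regularize if desired). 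Your closing remark about the $j=0$ term needing $\Phi\ge 0$ on $[0,\Norm\psi]$ is unnecessary once one notices the full sum over $j=0,\dots,N$ is what both proofs deliver and what the later applications in Section \ref{s5} actually use.
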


\begin{remark} By translation and scaling, this result can easily be generalized.  For example, while stated for $N+1$ equally spaced points between $0$ and $1$, we will apply it to $N+1$ odd integers stating at $1$.  The map $k\mapsto (k-1)/2N$ maps those $N+1$ integers into the points of the theorem.
\end{remark}

\begin{theorem} \lb{TA.2} For an integer $N\ge 2$ and an even, continuous, convex function, $\psi$ on $[-1,1]$ and a function, $\Phi$, on $[-\Norm{\psi}_\infty,\Norm{\psi}_\infty]$ which is continuous, odd and whose restriction to $[0,\Norm{\psi}]$ is convex.  Let $\overline{\psi}$ be given by
\begin{equation}\label{A.3A}
  \overline{\psi} = (2N+1)^{-1}\sum_{j=-N}^{N} \psi\left(\tfrac{j}{N}\right)
\end{equation}
Suppose that
\begin{equation}\label{A.3B}
  2\psi(1)+\psi(0)+2\psi\left(\tfrac{1}{N}\right) \ge 5 \overline{\psi}
\end{equation}
If $N$ is odd, suppose that
\begin{equation}\label{A.3C}
  \psi\left(\tfrac{1}{2}+\tfrac{1}{2N}\right) \le \overline{\psi}
\end{equation}
Then
\begin{equation}\label{A.3}
  \sum_{j=-N}^{N} \Phi\left(\psi\left(\tfrac{j}{N}\right)-\overline{\psi}\right) \ge 0
\end{equation}
\end{theorem}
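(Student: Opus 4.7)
The plan is to use the standard integral representation of odd convex functions to reduce \eqref{A.3} to a single pointwise non-negativity statement. Since $\Phi$ is continuous and odd on $[-\Norm{\psi}_\infty,\Norm{\psi}_\infty]$ with $\Phi|_{[0,\Norm{\psi}_\infty]}$ convex and $\Phi(0)=0$, it admits a decomposition
\begin{equation*}
\Phi(y) \;=\; \alpha\, y \;+\; \int_0^{\Norm{\psi}_\infty} \sgn(y)\bigl(|y|-t\bigr)_+ \, d\mu(t)
\end{equation*}
for some $\alpha\in\bbR$ and some non-negative Borel measure $\mu$ on $[0,\Norm{\psi}_\infty]$ (the integrand is the odd extension of the kink $(x-t)_+$). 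Writing $b_j := \psi(j/N)-\overline{\psi}$ and using $\sum_{j=-N}^{N} b_j = 0$, the $\alpha$-piece contributes nothing when summed, so it suffices to establish
\begin{equation*}
u(t) \; := \; \sum_{j=-N}^N \sgn(b_j)\bigl(|b_j|-t\bigr)_+ \;\ge\; 0 \qquad \text{for every } t\ge 0.
\end{equation*}

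Setting $a_k:=\psi(k/N)$ for $0\le k\le N$, the evenness and convexity of $\psi$ imply that $(a_k)$ is a non-decreasing, discretely convex sequence ($a_{k+1}+a_{k-1}\ge 2a_k$). Hence the deviations $b_k := a_k-\overline{\psi}$ form a non-decreasing convex sequence, and there is a unique sign-change index $p$ with $b_p\le 0 <b_{p+1}$, in terms of which
\begin{equation*}
u(t) \;=\; 2\sum_{k=p+1}^{N}(b_k-t)_+ \;-\; 2\sum_{k=1}^{p}(-b_k-t)_+ \;-\; (-b_0-t)_+
\end{equation*}
is piecewise linear with breakpoints at the values $|b_0|,-b_1,\dots,-b_p,b_{p+1},\dots,b_N$, satisfying $u(0)=0$ and $u(t)=0$ for $t\ge\max(-b_0,b_N)$. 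It therefore suffices to check $u\ge 0$ at each of these finitely many breakpoints.

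This is where conditions \eqref{A.3B} and \eqref{A.3C} enter. Rewritten in terms of the deviations, \eqref{A.3B} becomes $2b_N+2b_1+b_0\ge 0$, a quantitative lower bound pitting the largest positive deviation against the two most negative ones, while for odd $N$, \eqref{A.3C} is exactly the statement $a_{(N+1)/2}\le\overline{\psi}$, i.e.\ $p\ge (N+1)/2$, which pins the sign change to the upper half of the index set. I would then verify $u\ge 0$ by an induction on breakpoints starting from $t=\max(-b_0,b_N)$ and moving left: the discrete convexity of $(a_k)$ forces the spacings $a_{k+1}-a_k$ to be non-decreasing, so each time $t$ decreases past a negative breakpoint $-b_k$, the resulting slope drop is compensated by an earlier, at-least-as-large slope gain from an appropriately paired positive breakpoint $b_\ell$.

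The main obstacle is the delicate small-$t$ regime where every contribution is simultaneously active and the piecewise bookkeeping is most intricate. Condition \eqref{A.3B} is precisely the bound that prevents the ``first dip'' of $u$ near $t=0$ from going negative, and for odd $N$, \eqref{A.3C} rules out an off-center $p$ that would otherwise overload the negative side beyond what convexity alone could offset. Once these two regimes are under control via the hypotheses, the discrete convexity of $(a_k)$ propagates $u\ge 0$ through all remaining sub-intervals, completing the proof.
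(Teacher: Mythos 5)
Your integral ("kink") decomposition of $\Phi$ and the reduction to $u(t)\ge 0$ is a legitimate and, in fact, equivalent reformulation of what the paper does: the condition $u(t)\ge 0$ for all $t\ge 0$, together with $\sum_j b_j=0$, is precisely the Hardy--Littlewood--P\'olya/Karamata characterization that the padded vector of positive deviations majorizes the vector of magnitudes of the non-positive ones. Your reading of where \eqref{A.3C} enters (it forces the sign-change index $p$ to satisfy $p\ge(N+1)/2$, so the excess padding lands on the positive side) and your rewriting of \eqref{A.3B} as $2b_N+2b_1+b_0\ge0$ are both correct and line up with the paper's usage.

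The gap is in the final two paragraphs, which assert rather than prove that $u\ge 0$. You claim each negative breakpoint is "compensated by an earlier, at-least-as-large slope gain from an appropriately paired positive breakpoint $b_\ell$," but no such clean pairing exists in general and you never exhibit one. The paper's Example A.7 ($\psi(x)=x^2$, $N=6$) was included precisely to show that this single-crossing heuristic fails: the sorted positive deviations padded with one zero and the sorted magnitudes of the non-positive deviations are $\mathbf{x}=(22,22,11,11,2,2,0)$ and $\mathbf{y}=(14,13,13,10,10,5,5)$, and $x_j-y_j$ changes sign three times because the lone unpaired middle value $|b_0|$ shifts the two sorted lists out of register. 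The paper's resolution is to insert a rearranged intermediate vector $\mathbf{w}$, obtained by moving the trailing zero of $\mathbf{x}$ into slot $3$, and to prove $S_j(\mathbf{x})\ge S_j(\mathbf{w})\ge S_j(\mathbf{y})$; hypothesis \eqref{A.3B} is used exactly for the $j=3$ partial sum, and the monotonicity of the spacings $a_{k+1}-a_k$ (your observation, the paper's \eqref{A.16}) then yields a genuine single crossing only for $j\ge 4$. You would need to build the analogous realignment into your breakpoint induction; as written, the small-$t$ "bookkeeping" you flag as delicate is in fact the whole difficulty of the theorem and is left unresolved.
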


\begin{remarks} 1. We will see later (see \eqref{A.1B}) that $2\psi(1)+2\psi(0) \ge 4 \overline{\psi}$ which puts \eqref{A.3B} in perspective.

2. It might be true that this theorem holds without the need for the condition \eqref{A.3B} but it holds in the case we need so we didn't try hard to eliminate it.  We will explain later why a naive extension of the proof of Theorem \ref{TA.1} doesn't work in a simple example and led to the extra condition.  We do note that \eqref{A.3B} is a restriction.  If we normalize $\psi$ by $\psi(0)=0, \psi(1)=1$, then in the limit as $N\to\infty$, \eqref{A.3B} becomes
\begin{equation}\label{A.3D}
  \int_{0}^{1} \psi(x)\,dx \le \tfrac{2}{5}
\end{equation}
which for $\psi(x)=|x|^p$ requires $p\ge \tfrac{3}{2}$ while convexity only requires $p\ge 1$.

3. On the other hand, \eqref{A.3C}, as we will see, is quite natural independent of our method of proof.
\end{remarks}

One key idea behind the proofs is the theory of majorization.  We let $\bbR^n_{+,\ge}$ denote the set of $n$-tuples of reals, $\mathbf{x}=(x_1,\dots,x_n)$ with
\begin{equation}\label{A.4}
  x_1\ge x_2\ge\dots x_n \ge 0
\end{equation}
If $\mathbf{x}, \mathbf{y}\in\bbR^n_{+,\ge}$, we say that $\mathbf{x}$ \textit{majorizes} $\mathbf{y}$, written $\mathbf{x}\succ \mathbf{y}$ or $\mathbf{y}\prec \mathbf{x}$ if an only if
\begin{equation}\label{A.5}
  \sum_{j=1}^{n} x_j = \sum_{j=1}^{n} y_j; \quad S_k(\mathbf{x})\equiv\sum_{j=1}^{k} x_j \ge \sum_{j=1}^{k} y_j,\,  k=1,\dots,n-1
\end{equation}
which defines $S_k(\mathbf{x})$. A standard reference is Marshall-Olkin \cite{MarOlk} which has been called a love poem to majorization; other references are Hardy-Littlewood-P\'{o}lya \cite{HLP2} and Simon \cite[Chapters 14-15]{SimonConvex}.  We will rely on the following aspect of the theory:

\begin{theorem} [Karamata's Inequality] \lb{TA.3} Let $\mathbf{x}, \mathbf{y}\in\bbR^n_{+,\ge}$ with $\mathbf{x}\succ \mathbf{y}$ and let $\varphi$ be an arbitrary continuous convex function on $[0,x_1]$.  Then
\begin{equation}\label{A.6}
  \sum_{j=1}^{n} \varphi(x_j) \ge \sum_{j=1}^{n} \varphi(y_j)
\end{equation}
\end{theorem}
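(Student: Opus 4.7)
The plan is to prove Karamata's inequality by combining convexity in its subgradient form with Abel summation (summation by parts) applied to the partial sums appearing in the definition of majorization. The slogan is: the defect $\sum_j \varphi(x_j) - \sum_j \varphi(y_j)$ admits a lower bound of the form $\sum_j c_j(x_j - y_j)$, and after rearrangement this becomes a non-negative combination of the majorization inequalities $D_k := S_k(\mathbf{x}) - S_k(\mathbf{y}) \ge 0$ that are guaranteed by the hypothesis $\mathbf{x}\succ\mathbf{y}$.

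First, for each $j$ I would select a subgradient $c_j \in \partial\varphi(y_j)$ — concretely the right-derivative $c_j = \varphi'_+(y_j)$, with a one-sided modification at the right endpoint of $[0,x_1]$ if needed. Since $\varphi$ is convex, $\varphi'_+$ is monotone nondecreasing on the interior, and since $y_1 \ge y_2 \ge \cdots \ge y_n$ this forces $c_1 \ge c_2 \ge \cdots \ge c_n$. The defining subgradient inequality $\varphi(x) \ge \varphi(y_j) + c_j(x - y_j)$, evaluated at $x=x_j$ and summed over $j$, gives
\begin{equation*}
\sum_{j=1}^{n} \varphi(x_j) - \sum_{j=1}^{n} \varphi(y_j) \ge \sum_{j=1}^{n} c_j (x_j - y_j).
\end{equation*}

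Next, I would apply Abel summation to the right-hand side. Setting $D_0 = 0$ and $D_k = \sum_{i=1}^{k}(x_i - y_i) = S_k(\mathbf{x}) - S_k(\mathbf{y})$, the hypothesis $\mathbf{x}\succ\mathbf{y}$ is precisely $D_k \ge 0$ for $1 \le k \le n-1$ together with $D_n = 0$. Writing $x_j - y_j = D_j - D_{j-1}$ and rearranging,
\begin{equation*}
\sum_{j=1}^{n} c_j (x_j - y_j) = \sum_{j=1}^{n-1} (c_j - c_{j+1}) D_j + c_n D_n = \sum_{j=1}^{n-1} (c_j - c_{j+1}) D_j \ge 0,
\end{equation*}
since each $c_j - c_{j+1} \ge 0$ by the monotonicity established above and each $D_j \ge 0$ by majorization. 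Combining the two displays yields \eqref{A.6}.

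The proof is essentially algebraic once the setup is in place, so I do not expect a substantive obstacle. The one small technical point is the monotone selection of subgradients when $\varphi$ is merely continuous convex rather than $C^1$; this is handled either by the right-derivative argument above (which is automatic on the interior by convexity) or, if one wishes to bypass subdifferentials entirely, by mollifying $\varphi$, proving the inequality in the smooth case where $c_j = \varphi'(y_j)$, and passing to the uniform limit on the compact interval $[0, x_1]$. Both the subgradient bound and the Abel summation step are stable under such limits.
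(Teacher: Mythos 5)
Your proof is correct, but it follows a genuinely different route from the one the paper indicates. The paper does not prove Theorem \ref{TA.3} directly; it cites \cite{SimonTrace} and \cite{SimonConvex} and sketches the Hardy--Littlewood--P\'olya/Birkhoff approach: first show (by slicing with hyperplanes) that $\mathbf{y}\prec\mathbf{x}$ forces $\mathbf{y}$ to lie in the convex hull of the permutations of $\mathbf{x}$, then observe that $\mathbf{w}\mapsto\sum_j\varphi(w_j)$ is convex and permutation symmetric. Your argument instead uses the tangent-line (subgradient) inequality $\varphi(x_j)\ge\varphi(y_j)+c_j(x_j-y_j)$ with $c_j\in\partial\varphi(y_j)$, and then Abel summation to re-express $\sum_j c_j(x_j-y_j)$ as $\sum_{j=1}^{n-1}(c_j-c_{j+1})D_j$, a nonnegative combination of the partial-sum defects $D_j\ge 0$; this is the classical ``Karamata'' proof. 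Your approach is more elementary and self-contained in that it bypasses the geometric characterization of majorization (which is itself a nontrivial theorem), at the modest cost of the boundary technicality you already flag: if $y_j$ is an endpoint of $[0,x_1]$ the one-sided derivative may be infinite (e.g.\ $\varphi'_+(0)=-\infty$ for $\varphi(x)=-\sqrt{x}$). Your proposed fixes both work; an especially clean one is to approximate $\varphi$ uniformly by convex piecewise-linear functions, which have finite subgradients everywhere, and pass to the limit. Alternatively one can observe that if $y_{j_0}=0$ then $D_n=0$ and $D_{j_0-1}\ge 0$ force $x_k=y_k=0$ for all $k\ge j_0$, so the tail contributes nothing and can be dropped before choosing subgradients. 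The paper's sketched route buys a conceptual picture (Schur convexity and the permutahedron); yours buys a short, hands-on computation.
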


\begin{remarks} 1.  Even though this is widely referred to as Karamata's inequality (e.g. Cvetkovski\cite{Cvet} or Wikipedia\cite{Wiki}) after Karamata's 1932 paper \cite{Kara}, it or theorems that imply it appear in a 1923 paper of Schur \cite{Schur} and a 1929 paper of Hardy-Littlewood-P\'{o}lya \cite{HLP1}.  That said, we note that \cite{HLP1} doesn't have a proof which may not have appeared until \cite{HLP2} in 1934 and that Karamata proved a converse, namely, if $\mathbf{x}, \mathbf{y}\in\bbR^n_{+,\ge}$ and \eqref{A.6} holds for all convex $\varphi$, then $\mathbf{x}\succ \mathbf{y}$.

2. The idea of one proof is quite simple (for details, see, for example, Simon \cite[Theroem 1.9]{SimonTrace} or Simon \cite[Theroem 15.5]{SimonConvex}): by slicing with specific hyperplanes, one proves that $\mathbf{y}$ is in the convex hull in $\bbR^n$ of the (at most) $n!$ points obtained from $\mathbf{x}$ by permuting the coordinates and then one notes the function $\mathbf{w}\mapsto \sum_{j=1}^{n} \varphi(w_j)$ is convex and permutation symmetric.
\end{remarks}

In the case of Theorem \ref{TA.1}, we will prove that $\mathbf{x}\succ \mathbf{y}$ using a simple observation:

\begin{proposition} \lb{PA.4} Suppose that $\mathbf{x}, \mathbf{y}\in\bbR^n_{+,\ge}$ with $\sum_{j=1}^{n} x_j = \sum_{j=1}^{n} y_j$ and that for some $\ell\in 2,\dots,n-1$, one has that
\begin{equation}\label{A.7}
  j< \ell \Rightarrow x_j > y_j \qquad \qquad j \ge \ell \Rightarrow x_j \le y_j
\end{equation}
Then $\mathbf{x}\succ \mathbf{y}$.
\end{proposition}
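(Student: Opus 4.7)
The plan is to verify the two defining properties of majorization listed in \eqref{A.5}. The equality of full sums $\sum_{j=1}^n x_j = \sum_{j=1}^n y_j$ is an explicit hypothesis, so the only content is to check the partial-sum inequalities $S_k(\mathbf{x}) \ge S_k(\mathbf{y})$ for $k = 1, \dots, n-1$. I would do this by splitting on the position of $k$ relative to the crossover index $\ell$.

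For $k < \ell$, every index $j \in \{1, \dots, k\}$ satisfies $j < \ell$, so by the first implication in \eqref{A.7} we have $x_j > y_j$ termwise, and summing gives
\[
S_k(\mathbf{x}) - S_k(\mathbf{y}) = \sum_{j=1}^{k}(x_j - y_j) > 0.
\]
For $k \ge \ell$, I would use the full-sum equality to rewrite the partial difference as a tail difference:
\[
S_k(\mathbf{x}) - S_k(\mathbf{y}) = -\sum_{j=k+1}^{n}(x_j - y_j) = \sum_{j=k+1}^{n}(y_j - x_j).
\]
Since $k \ge \ell$ forces $j \ge k+1 > \ell$, in particular $j \ge \ell$, so the second implication in \eqref{A.7} gives $y_j - x_j \ge 0$ for each term, and therefore $S_k(\mathbf{x}) \ge S_k(\mathbf{y})$. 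Combined with the $k < \ell$ case and the equality at $k = n$, this is exactly the definition of $\mathbf{x} \succ \mathbf{y}$.

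There is no real obstacle in this proposition; it is a bookkeeping lemma whose only subtle point is choosing which direction (head vs.\ tail) to express $S_k(\mathbf{x}) - S_k(\mathbf{y})$ in each regime so that the sign of each summand is manifest. The ordering $x_1 \ge \dots \ge x_n \ge 0$ and likewise for $\mathbf{y}$ (i.e.\ membership in $\mathbb{R}^n_{+,\ge}$) is needed only so that majorization is defined; it plays no role in the inequalities themselves, which rely purely on the crossover hypothesis \eqref{A.7} and the equality of the full sums.
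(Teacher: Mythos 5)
Your proof is correct and follows the same two-case argument as the paper: for $k < \ell$ sum the termwise inequalities, and for $k \ge \ell$ use the full-sum equality to convert the partial sum into a tail sum where the opposite termwise inequality applies. The only cosmetic difference is indexing the tail from $k+1$ rather than $k$; the substance is identical.
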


\begin{proof} If $k< \ell$, it is immediate that $\sum_{j=1}^{k} x_j \ge \sum_{j=1}^{k} y_j$ and similarly, it is immediate that if $k\ge\ell$, then  $\sum_{j=k}^{n} x_j \le \sum_{j=k}^{\ell} y_j$.  Subtracting this from $\sum_{j=1}^{n} x_j = \sum_{j=1}^{n} y_j$, we see that also for $k\ge\ell$, one has that $\sum_{j=1}^{k} x_j \ge \sum_{j=1}^{k} y_j$.
\end{proof}

We need two preliminaries for the proof of Theorem \ref{TA.1}:

\begin{lemma} \lb{LA.5} Let $\psi$ be a convex function on $[0,1]$ and suppose that
\begin{equation}\label{A.8}
  0\le \tilde{b}\equiv 2c-b < \tilde{a} \equiv 2c-a \le c \le a < b \le 1
\end{equation}
Then
\begin{equation}\label{A.9}
  \tfrac{1}{2}(\psi(b)+\psi(\tilde{b})) \ge \tfrac{1}{2}(\psi(a)+\psi(\tilde{a}))\ge \psi(c)
\end{equation}
Moreover, the first inequality is strict unless $\psi'(s)$ is constant on $(\tilde{a},a)$.
\end{lemma}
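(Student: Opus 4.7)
The plan is to separate the two inequalities. The second, $\tfrac{1}{2}(\psi(a)+\psi(\tilde{a})) \ge \psi(c)$, is immediate from convexity since $c = (a+\tilde{a})/2$ is the midpoint of $a$ and $\tilde{a}$; this is just midpoint convexity of $\psi$. So the substance of the proof is the first inequality, and the strictness analysis.

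For the first inequality, I would use the key arithmetic fact that $\tilde{a}-\tilde{b} = (2c-a)-(2c-b) = b-a$, which allows the two differences to be compared via a single change of variables. Writing the fundamental theorem of calculus for the convex function $\psi$ (using one of its one-sided derivatives if $\psi$ is not everywhere differentiable),
$$
\psi(b)-\psi(a) = \int_a^b \psi'(s)\,ds, \qquad \psi(\tilde{a})-\psi(\tilde{b}) = \int_{\tilde{b}}^{\tilde{a}} \psi'(s)\,ds.
$$
The substitution $s = 2c-u$ maps $[\tilde{b},\tilde{a}]$ onto $[a,b]$ and converts the second integral to $\int_a^b \psi'(2c-u)\,du$. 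Subtracting yields
$$
\bigl[\psi(b)+\psi(\tilde{b})\bigr] - \bigl[\psi(a)+\psi(\tilde{a})\bigr] = \int_a^b \bigl[\psi'(s) - \psi'(2c-s)\bigr]\,ds.
$$
For $s\in[a,b]$, the hypothesis $c \le a$ gives $s \ge a \ge c$, so $s \ge 2c-s$. Convexity of $\psi$ makes $\psi'$ monotone nondecreasing, so $\psi'(s) \ge \psi'(2c-s)$ pointwise on $[a,b]$, and the integral is nonnegative.

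For the strictness claim, if the integral vanishes then $\psi'(s) = \psi'(2c-s)$ for almost every $s\in[a,b]$. Since $s \mapsto \psi'(s)$ is nondecreasing on $[a,b]$ while $s \mapsto \psi'(2c-s)$ is nonincreasing on the same interval, their equality a.e.\ forces both to be constant on $[a,b]$ with the same value. By reflection about $c$, $\psi'$ takes that same constant value on $[\tilde{b},\tilde{a}]$ as well, so $\psi'$ is constant on the whole interval $[\tilde{b},b]$, which contains $(\tilde{a},a)$. Thus a necessary condition for equality is that $\psi'$ be constant on $(\tilde{a},a)$, yielding the stated strictness.

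The only real subtlety is that a convex function on $[0,1]$ need not be everywhere differentiable; however it is absolutely continuous on compact subintervals of $(0,1)$ and its left and right derivatives exist at every interior point, agree off a countable set, and are monotone nondecreasing. So the integral representations above are valid with either one-sided derivative in place of $\psi'$, and the monotonicity argument driving the sign and strictness analysis goes through verbatim. I expect this mild regularity bookkeeping to be the only place care is required; the rest is just the symmetric reflection about $c$.
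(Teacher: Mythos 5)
Your proof is correct and takes essentially the same approach as the paper: both reduce the first inequality to $\int_a^b[\psi'(s)-\psi'(2c-s)]\,ds \ge 0$ via the fundamental theorem of calculus and a reflection about $c$, then invoke monotonicity of $\psi'$; the paper normalizes to $c=\tfrac12$, $b=1$ while you keep general coordinates, and the paper deduces the second inequality by specializing the first ($a\to c$, $b\to a$) rather than citing midpoint convexity directly, but these are cosmetic differences.
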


\begin{proof}  If one takes $a=c$ and then replaces $b$ by $a$, the first inequality becomes the second so it suffices to prove the first one.  Without loss (by translation and scaling) we can take $c=\tfrac{1}{2}, b=1$ so that $\tilde{b}=1$ and $\tilde{a}=1-a$. By the fundamental theorem of calculus (a general convex function is not $C^1$ but it is differentiable with the possible exception of a countable set and the fundamental theorem of calculus holds; see Simon \cite[Theorem 1.28]{SimonConvex})
\begin{equation}\label{A.10}
  \tfrac{1}{2}(\psi(1)+\psi(0))-\tfrac{1}{2}(\psi(a)+\psi(\tilde{a})) = \tfrac{1}{2} \int_{a}^{1} [\psi'(s)-\psi'(1-s)]\, ds
\end{equation}
By convexity, the integrand is non-negative so we have proven \eqref{A.9}.  Moreover if $\psi'(s)$ is not constant on $(1-a,a)$, then the integral is strictly positive.
\end{proof}

\begin{proposition} \lb{PA.6} Let $\psi$, $\overline{\psi}$ and $N$ be as in Theorem \ref{TA.1}.  Then
\begin{equation}\label{A.1A}
  n \equiv \#\{j\,\mid\,\psi\left(\tfrac{j}{N}\right) \le \overline{\psi}\} \ge (N+1)/2
\end{equation}
and
\begin{equation}\label{A.1B}
  \psi\left(\tfrac{1}{2}\right) \le \overline{\psi} \le \tfrac{1}{2}(\psi(0)+\psi(1))
\end{equation}
Moreover, the inequalities in \eqref{A.1B} are strict if $N\ge 2$ and $\psi$ is not an affine function on $[0,1]$ (i.e. $\psi'$ is not constant).
\end{proposition}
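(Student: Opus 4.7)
The plan is to establish \eqref{A.1B} first and then deduce \eqref{A.1A} as an immediate consequence of its lower half combined with strict monotonicity of $\psi$.

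For the upper bound $\overline{\psi}\le\tfrac{1}{2}(\psi(0)+\psi(1))$, the natural move is to apply convexity pointwise: writing $j/N=(1-j/N)\cdot 0+(j/N)\cdot 1$ and using convexity of $\psi$ gives $\psi(j/N)\le(1-j/N)\psi(0)+(j/N)\psi(1)$. Averaging over $j=0,\dots,N$ and using $\sum_{j=0}^{N}(j/N)=(N+1)/2$ produces exactly $\tfrac{1}{2}(\psi(0)+\psi(1))$. For the lower bound $\psi(1/2)\le\overline{\psi}$, I would pair each $j$ with $N-j$, so that $j/N$ and $(N-j)/N$ are symmetric about $c=\tfrac{1}{2}$. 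Applying Lemma \ref{LA.5} (taking $a=c$ so that the second inequality in \eqref{A.9} reads $\tfrac{1}{2}(\psi(b)+\psi(\tilde b))\ge\psi(c)$) to each such pair yields $\tfrac{1}{2}(\psi(j/N)+\psi((N-j)/N))\ge\psi(1/2)$. Writing $\sum_{j=0}^{N}\psi(j/N)=\sum_{j=0}^{N}\tfrac{1}{2}(\psi(j/N)+\psi((N-j)/N))$ by symmetry and dividing by $N+1$ gives the claim.

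For \eqref{A.1A}, strict monotonicity of $\psi$ shows that $\{j:\psi(j/N)\le\overline{\psi}\}$ is of the form $\{0,1,\dots,n-1\}$. The bound $\psi(1/2)\le\overline{\psi}$ together with monotonicity ensures every $j$ with $j/N\le\tfrac{1}{2}$ belongs to this set. A direct count gives $k+1$ such values when $N=2k$ (so $n\ge(N+2)/2\ge(N+1)/2$) and $k+1$ when $N=2k+1$ (so $n\ge(N+1)/2$); in either case the claim follows.

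For the strictness clause (assuming $N\ge 2$ and $\psi$ not affine), the upper inequality becomes strict because, for $N\ge 2$, the pointwise convexity estimate $\psi(j/N)\le(1-j/N)\psi(0)+(j/N)\psi(1)$ at some interior index $j\in\{1,\dots,N-1\}$ must be strict, since if it were an equality for even a single such $j$, convexity would force $\psi$ to coincide with the secant connecting $(0,\psi(0))$ to $(1,\psi(1))$ on all of $[0,1]$, contradicting non-affineness. The lower inequality is strict by the same mechanism applied inside Lemma \ref{LA.5}: if $\overline{\psi}=\psi(1/2)$, each paired inequality must itself be an equality, and in particular the $j=0$ pair forces $\tfrac{1}{2}(\psi(0)+\psi(1))=\psi(1/2)$, which again makes $\psi$ affine on $[0,1]$. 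The only substantive step is invoking these standard rigidity facts for convex functions, and I do not expect a real obstacle here: all ingredients (the defining convexity inequality, Lemma \ref{LA.5}, and strict monotonicity of $\psi$) are already at hand.
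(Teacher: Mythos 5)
Your proof is correct and follows essentially the same route as the paper's: pair $j$ with $N-j$, invoke Lemma \ref{LA.5}, average over $j$, and then use strict monotonicity plus the lower bound in \eqref{A.1B} to count the indices with $\psi(j/N)\le\overline{\psi}$. The only cosmetic difference is that for the upper bound you apply the pointwise convexity inequality $\psi(j/N)\le(1-j/N)\psi(0)+(j/N)\psi(1)$ directly, whereas the paper uses the paired form from \eqref{A.9}; both lead to the same conclusion, including the strictness clause.
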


\begin{proof} For any $j=0,1,\dots,N$, \eqref{A.9} implies that
\begin{equation}\label{A.11}
   \psi\left(\tfrac{1}{2}\right) \le  \tfrac{1}{2}\left(\psi\left(\tfrac{j}{N}\right)+\psi\left(1-\tfrac{j}{N}\right)\right) \le \tfrac{1}{2}(\psi(0)+\psi(1))
\end{equation}
Averaging over $j$ yields \eqref{A.1B}.  If $\psi$ is not affine on $[0,1]$, then the second inequality is strict for $1\le j \le N-1$ so the second inequality in \eqref{A.1B} is strict.  Since $\psi\left(\tfrac{1}{2}\right) < \tfrac{1}{2}(\psi(0)+\psi(1))$ if $\psi$ is not affine, we see that in the case the first inequality is always strict.

Since $\psi$ is strictly monotone, the first inequality in \eqref{A.1B} implies the unique $x\in [0,1]$ with $\psi(x)=\overline{\psi}$  has $x\ge\tfrac{1}{2}$. This implies that $n = \#\{j\,\mid\,\tfrac{j}{N} \le x\} \ge \#\{j\,\mid\,\tfrac{j}{N} \le \tfrac{1}{2}\} \ge (N+1)/2.$
\end{proof}

\begin{proof} [Proof of Theorem \ref{TA.1}] Let $q=N+1-n\le n$ by \eqref{A.1A}.  Define
\begin{equation}\label{A.13}
  y_j= \overline{\psi} - \psi\left(\tfrac{j}{N}\right) \qquad j=0,\dots,n-1
\end{equation}
\begin{equation}\label{A.14}
  x_j = \left\{
          \begin{array}{ll}
            \psi\left(\tfrac{N+1-j}{N}\right) - \overline{\psi}\, & \hbox{ if } j=1,\dots,q \\
            0, & \hbox{ if } j \ge q
          \end{array}
        \right.
\end{equation}
Since $\psi$ is monotone and $n$ is defined by \eqref{A.1A}, we have that $\mathbf{x}, \mathbf{y}\in\bbR^n_{+,\ge}$.  By the definition of $\overline{\psi}$, we have that
\begin{equation}\label{A.15}
  \sum_{j=1}^{n} x_j = \sum_{j=1}^{n} y_j
\end{equation}

If $N=1$ or $\psi$ is affine on $[0,1]$, it is easy to see that $x_j=y_j$ for all $j$, so, since $\Phi$ is odd, we have that \eqref{A.4} holds.  Thus henceforth we will suppose that  $N\ge 2$ and $\psi$ is not an affine function on $[0,1]$, so, in particular, the inequalities in \eqref{A.1B} are strict.

Note next that because $\psi$ is assumed convex, we have that
\begin{equation}\label{A.16}
  m<p \Rightarrow \psi\left(\tfrac{m+1}{N}\right)-\psi\left(\tfrac{m}{N}\right) \le \psi\left(\tfrac{p+1}{N}\right)-\psi\left(\tfrac{p}{N}\right)
\end{equation}

By the strict form of \eqref{A.1B}, $x_1>y_1$.  Because of \eqref{A.15}, there must be a first $\ell$ so that $x_\ell\le y_\ell$.  We claim that if $\ell<n$, then $x_{\ell+1}\le y_{\ell+1}$.  If $\ell+1>q$, then $x_{\ell+1}=0$ and the required inequality is immediate.  If $\ell+1\le q$, then \eqref{A.16} implies that $x_{\ell}-x_{\ell+1} \ge y_\ell-y_{\ell+1}$.  Subtracting this from  $x_\ell\le y_\ell$ proves that $x_{\ell+1}\le y_{\ell+1}$.  Repeating this argument, proves that for all $j\ge\ell$ we have that $x_j\le y_j$.  Thus by Proposition \ref{PA.4}, $\mathbf{x}\succ \mathbf{y}$.

By Karamata's inequality, \eqref{A.6}, we conclude that $\sum_{j=1}^{n} \Phi(x_j)-\Phi(y_j) \ge 0$.  Since $\Phi$ is odd and $\Phi(0)=0$, this is equivalent to \eqref{A.4}.
\end{proof}

\begin{example} \lb{EA.7} To understand why we need the extra condition \eqref{A.3B} in Theorem \ref{TA.2}, we consider $\mu_S$ for $S=6$, i.e. $13$ pure points with weight $1/13$ at $0,\pm 1,\pm 2,\pm 3,\pm 4,\pm 5,\pm 6$.  By \eqref{5.2}, the average of the square is $A_6 = 14$.  The values of $j^2$ are $j^2=0,1,1,4,4,9,9,16,16,25,25,36,36$ so $n=7$ values are less than $A_6$ and one sees that (ignore $\mathbf{w}$ for now)
\begin{align}\label{A.22}
  \mathbf{x} &= 22,22,11,11,\,\;2,2,0 \nonumber \\
  \mathbf{y} &= 14,13,13,10,10,5,5 \nonumber \\
  \mathbf{w} &= 22,22,\,\;0,11,11,2,2
\end{align}
One can verify that $\mathbf{x}\succ \mathbf{y}$ by hand (and, below, we will prove the result for all $S\ge 2$) but one can't use Proposition \ref{PA.4} as we did in our proof of Theorem \ref{TA.1} for $x_j-y_j$ shifts signs three times instead of one time.  The problem is that the components of $\mathbf{x}$ and $\mathbf{y}$ are paired but shifted.

Look at $\mathbf{w}$ which we get by moving the $0$ from position $7$ to position $3$.  One can handle the first  three partial sums by seeing that $22+22\ge 14+13+13$ and the remaining partial sums by noting that there is only one sign shift after the third place and use Proposition \ref{PA.4} to prove the partial sums of $\mathbf{w}$ dominate those of $\mathbf{y}$ and note it is trivial that partial sums of $\mathbf{x}$ dominate those of $\mathbf{w}$.  The key is that by moving the $0$, the pairs are no longer shifted.
\end{example}

\begin{proof} [Proof of Theorem \ref{TA.2}] Notice that $2N+1$ is odd, so either $\mathbf{x}$ or $\mathbf{y}$ needs to have a zero added.  To be sure that it is added to the $\mathbf{x}$, we need the analog of \eqref{A.1A}, namely that
\begin{equation}\label{A.23}
  n \equiv \#\{j\,\mid\,\psi\left(\tfrac{j}{N}\right) \le \overline{\psi}\} \ge N+1
\end{equation}
If $N=2k$ is even, the $N+1$ values of $j$ among the $2N+1$ overall values with the smallest values of $\psi\left(\tfrac{j}{N}\right)$ are $0,\pm 1,\pm k$, so \eqref{A.23} follows from the first inequality in \eqref{A.1B}.  If $N=2k+1$ is odd, we can't take the max $j$ to be $k$ since that only yields $2k+1<N+1$ values and therefore we need to go up to $j=k+1$ yielding $N+2$ values (leaving over $N-1$), so we need \eqref{A.3C} to hold.  In that case, there are at least three more values $\le\overline{\psi}$ than $>\overline{\psi}$!  (This has to be because $n$ is odd and $N-n$ even and we require that $n>N-n$.)

Define $\mathbf{x}$ to be the $2N+1-n$ values of $\psi-\overline{\psi}$ larger than $0$ written in decreasing order plus $2n-2N+1$ zero values at the end and $\mathbf{y}$ the $n$ non-negative values of $\overline{\psi}-\psi$ so $\mathbf{x}, \mathbf{y}\in\bbR^n$.  As in the proof of Theorem \ref{TA.1}, if we prove that $\mathbf{x}\succ \mathbf{y}$, then \eqref{A.3} follows from Karamata's inequality.

For $N\ge 2$ (so $n\ge 3$), define $\mathbf{w}\in\bbR^n_{+,\ge}$ by
\begin{equation}\label{A.24}
  w_j = \left\{
          \begin{array}{ll}
            x_j, & \hbox{ if } j=1,2 \\
            0, & \hbox{ if } j=3\\
            x_{j-1}, & \hbox{ if } j\ge 3
          \end{array}
        \right.
\end{equation}
We claim first that
\begin{equation}\label{A.25}
  S_n(\mathbf{x}) = S_n(\mathbf{w}); \qquad S_j(\mathbf{x}) \ge S_j(\mathbf{w}),\, j=1,\dots, n-1
\end{equation}
(we do not write this as $\mathbf{x}\succ\mathbf{w}$, first because $\mathbf{w}\notin\bbR^n_{+,\ge}$ and we've only defined majorization for such sequences and also because it is usual to extend $\succ$ to $\bbR^n_+$ by demanding the relations for the decreasing rearrangements rather than by \eqref{A.25}).  This follows because $\mathbf{w}$ is a rearrangement of $\mathbf{x}$ and the decreasing sequence, $S_j(\mathbf{x})$, is maximal among those sums for any arrangement or also, more directly, by noting that $S_j(\mathbf{w}) = S_j(\mathbf{x})$ for $j=1,2,n$ and $S_j(\mathbf{w}) = S_{j-1}(\mathbf{x})$ for $j\ge 3$.

Thus, if we prove that
\begin{equation}\label{A.26}
  S_n(\mathbf{w}) = S_n(\mathbf{y}); \qquad S_j(\mathbf{w}) \ge S_j(\mathbf{y}),\, j=1,\dots, n-1
\end{equation}
then $\mathbf{x}\succ \mathbf{y}$ and the proof is done.

\eqref{A.26} for $j=n$ follows from the definition of $\overline{\psi}$.  For $j=1,2$, we note that by the second inequality of \eqref{A.1B}, we have that $w_1=x_1\ge y_1$ and then that $S_2(\mathbf{w})=S_2(\mathbf{x})=2x_1\ge 2y_1 \ge y_1+y_2$.  For $j=3$, the required inequality is the hypothesis \eqref{A.3B}.

Because of the equality for $j=n$ and inequality for $j=3$, there must be a first $\ell\ge 4$ with $x_\ell \le y_\ell$.  We claim either $x_{\ell+1}=0$ or $x_\ell - x_{\ell+1} \ge y_\ell - y_{\ell+1}$ because if $x_{\ell+1} >0$ and $\ell$ is odd, we have that $x_\ell = x_{\ell+1}$ and $y_\ell = y_{\ell+1}$ or $\ell$ is even and we can use \eqref{A.16} to prove the inequality as in the proof of Theorem \ref{TA.1}.  By induction we see that for $j\ge \ell$, we have that $x_j\le y_j$. By following the argument used to prove Proposition \ref{PA.4} we conclude that for all $j\ge 4$, we have \eqref{A.26}.
\end{proof}

\begin{proof} [Proof of Theorem \ref{T5.2}] If $S$ is half an odd integer, we can use Theorem \ref{TA.1} with $N=S-\tfrac{1}{2}$ and $\psi(x) = \left(\tfrac{1}{2}+Nx\right)^2$ which is non-negative, strictly monotone and convex and $\Psi(y)=y^{2m+1}$.

If $S$ is an integer, we will use Theorem \ref{TA.2} with $N=S$, $\psi(x)=(Sx)^2$ and $\Psi(y)=y^{2m+1}$.  We need to check \eqref{A.3B}, which says that
\begin{equation}\label{A.27}
  2S^2+2 \ge 5A_S = \tfrac{5}{3}S(S+1)
\end{equation}
and \eqref{A.3C} which says that for $S\ge 3$ odd, one has that
\begin{equation}\label{A.28}
  S^2\left(\tfrac{1}{2}+\tfrac{1}{2S}\right)^2 \le A_S = \tfrac{S(S+1)}{3}
\end{equation}

\eqref{A.27} is equivalent to
\begin{equation}\label{A.29}
  0\le S^2-5S+6 = (S-2)(S-3)
\end{equation}
which holds for all integral $S$ (since the polynomial $x^2-5x+6$ is negative precisely for $x\in (2,3)$).  \eqref{A.28} is equivalent to $3S^2+6S+3\le 4S^2+4S$ or
\begin{equation}\label{A.30}
  0\le S^2-2S-3 = (S-3)(S+1)
\end{equation}
which holds for all $S\ge 3$ as required.
\end{proof}

We note that when $S=1$ with the normalization used in this proof, one has that the values of $j^2$ are $0, 1, 1$ so $A_1=\tfrac{1}{3}$ and $\mathbf{x}=(\tfrac{1}{3},\tfrac{1}{3})$ while $y=(\tfrac{2}{3},0)$ so $\mathbf{x}\prec \mathbf{y}$ and for all $m$, \eqref{5.6} holds with the opposite sign!

\smallskip

As we mentioned, we would guess that results like Theorem \ref{TA.2} hold without \eqref{A.3B}.  To be explicit, consider the an analog of \eqref{5.6} with $3j^2$ replace by $|j|^p$ and $3A_S=S(S+1)$ replaced by the suitable average.  We know this analog is valid for any $p>1$ if $S$ is half an odd integer but because \eqref{A.3B} fails when $p<3/2$ and $S$ is large we don't know the answer to

\textbf{Question 2} Does the analog of \eqref{5.6} hold for $S\ne 1$ integral when $3j^2$ is replaced by $|j|^p$ for any $p>1$ and with $3A_S=S(S+1)$ replaced by the suitable average.

This is an explicit example for what we hope might be a general result.


\end{document}